\documentclass[a4paper,fleqn]{cas-dc}

\usepackage[numbers,longnamesfirst]{natbib}
\usepackage{amsmath,amssymb,amsfonts}
\usepackage{graphicx}
\usepackage{xcolor}
\usepackage{subfigure}
\usepackage{booktabs}
\usepackage{multirow}
\usepackage{makecell}
\usepackage{threeparttable}
\usepackage{diagbox}
\usepackage{enumitem} 
\usepackage{url}
\usepackage[ruled, linesnumbered]{algorithm2e}


\newtheorem{question}{Question}
\newtheorem{scenario}{Scenario}
\newtheorem{theorem}{Theorem}
\newtheorem{lemma}{Lemma}
\newtheorem{proposition}{Proposition}
\newtheorem{corollary}{Corollary}
\newproof{proof}{Proof}


\begin{document}
	
\title[mode = title]{Equivalence, Identity, and Unitarity Checking in Black-Box Testing of Quantum Programs}
\shorttitle{Equivalence, Identity, and Unitarity Checking in Black-Box Testing of Quantum Programs}

\author[1]{Peixun Long}[orcid=0009-0000-9255-9335]
\affiliation[1]{organization={State Key Laboratory of Computer Science, Institute of Software, Chinese Academy of Science; University of Chinese Academy of Science},
city={Beijing},
country={China}}
\cormark[1] 
\ead{longpx@ios.ac.cn}

\author[2]{Jianjun Zhao}[orcid=0000-0002-2208-0289]
\affiliation[2]{organization={Kyushu University},
city={Fukuoka},
country={Japan}}
\ead{zhao@ait.kyushu-u.ac.jp}

\cortext[cor1]{Corresponding author} 

\begin{abstract}
Quantum programs exhibit inherent non-deterministic behavior, which poses more significant challenges for error discovery compared to classical programs. While several testing methods have been proposed for quantum programs, they often overlook fundamental questions in black-box testing. In this paper, we bridge this gap by presenting three novel algorithms specifically designed to address the challenges of equivalence, identity, and unitarity checking in black-box testing of quantum programs. We also explore optimization techniques for these algorithms, including specialized versions for equivalence and unitarity checking, and provide valuable insights into parameter selection to maximize performance and effectiveness. To evaluate the effectiveness of our proposed methods, we conducted comprehensive experimental evaluations, which demonstrate that our methods can rigorously perform equivalence, identity, and unitarity checking, offering robust support for black-box testing of quantum programs.
\end{abstract}
\begin{keywords}
Quantum Programs \sep Software Testing \sep Black-Box Testing \sep Equivalence Checking \sep Unitarity Checking
\end{keywords}


\maketitle

\section{Introduction}
\label{sec:introduction}
Quantum computing, which utilizes the principles of quantum mechanics to process information and perform computational tasks, is a rapidly growing field with the potential to revolutionize various disciplines~\cite{national2019quantum}. It holds promise for advancements in optimization~\cite{farhi2014quantum}, encryption~\cite{mosca2018cybersecurity}, machine learning~\cite{biamonte2017quantum}, chemistry~\cite{mcardle2018quantum}, and materials science~\cite{yang2017mixed}. Quantum algorithms, when compared to classical algorithms, offer the potential to accelerate the solution of specific problems~\cite{deutsch1985quantum,grover1996fast,shor1999polynomial}. As quantum hardware devices and algorithms continue to develop, the importance of creating high-quality quantum software has become increasingly evident. However, the nature of quantum programs, with their special characteristics such as superposition, entanglement, and non-cloning theorems, makes it challenging to track errors in these programs~\cite{miranskyy2020your,huang2019statistical}. Therefore, effective testing of quantum programs is crucial for the advancement of quantum software development.

Black-box testing~\cite{beizer1995black} is a software testing method that assesses the functionality of a program without examining its internal structure or implementation details. This method has broad applications for identifying software errors and improving software reliability. The inherently non-deterministic behavior of quantum programs makes error detection more challenging than in classical programs. Additionally, due to the potential interference of measurement with quantum states, observing the internal behavior of quantum programs becomes nearly impossible. Consequently, black-box testing assumes a crucial role in testing quantum programs. While several testing methods for quantum programs have been proposed~\cite{abreu2022metamorphic,ali2021assessing,fortunato2022mutation,li2020projection,miranskyy2019testing,wang2018quanfuzz}, these methods have paid limited attention to the fundamental questions in black-box testing of quantum programs. Important questions essential to black-box testing for quantum programs have remained largely unexplored.

Black-box testing of quantum programs refers to testing these programs based solely on selecting the appropriate inputs and detecting the corresponding outputs. To effectively address the challenges associated with black-box testing in quantum programs, it is essential to explore and answer the following fundamental research questions (RQs):

\begin{itemize}
    \item[(1)] \textbf{Equivalence Checking:} Given two quantum programs $\mathcal{P}$ and $\mathcal{P'}$, how can we determine whether they are equivalent?
    
    \item[(2)] \textbf{Identity Checking:} Given a quantum program $\mathcal{P}$, how can we check whether it represents an identity transform?
    
    \item[(3)] \textbf{Unitarity Checking:} Given a quantum program $\mathcal{P}$, how can we ascertain whether it represents a unitary transform?
\end{itemize}

In this paper, we aim to address these RQs and lay the foundation for black-box testing of quantum programs. We propose three novel methods that specifically target equivalence, identity, and unitarity checking in quantum programs. For equivalence checking, we introduce a novel algorithm based on the Swap Test, which compares the outputs of two quantum programs on Pauli input states. To simplify identity checking, we present a straightforward algorithm to avoid repeated running. Additionally, we derive a critical theorem that provides a necessary and sufficient condition for a quantum operation to be a unitary transform, enabling us to develop an effective unitarity checking algorithm. 

Furthermore, we conduct theoretical analyses and experimental evaluations to demonstrate the effectiveness and efficiency of our proposed algorithms. We also discuss the optimization of our algorithms. The results of our evaluations confirm that our algorithms successfully perform equivalence, identity, and unitarity checking, supporting the black-box testing of quantum programs.

In summary, our paper makes the following contributions:

\begin{itemize}
\item \textbf{A theorem on unitarity checking:} We proved a critical theorem about how to check whether a quantum operation is a unitary transform, which is a basis for developing our algorithm for unitarity checking.

\item \textbf{Checking algorithms:} We develop three novel algorithms for checking the equivalence, identity, and unitarity of quantum programs, respectively.

\item \textbf{Algorithm optimization:} We explore the optimization of our checking algorithms by devising optimized versions that specifically target equivalence and unitarity checking. We also discuss and provide detailed insights into the selection of algorithm parameters to maximize their performance and effectiveness.
\end{itemize}

Through these contributions, our paper advances the foundation of black-box testing for quantum programs and provides valuable insights for quantum software development.

The organization of this paper is as follows. Section~\ref{sec:background} introduces some basic concepts and technologies of quantum computation.
We discuss questions and their motivations we want to address in this paper in Section~\ref{sec:questions}. Section~\ref{sec:methods} presents novel algorithms to solve these questions.
Section~\ref{sec:optimization} discusses the optimization of equivalence checking and unitarity checking.
Section~\ref{sec:Experments} discusses the experimental evaluation. Section~\ref{sec:threats} discusses the threats to the validity of our methods. We discuss related work in Section~\ref{sec:relatedwork}, and the conclusion is given in Section~\ref{sec:conclusion}.

\section{Background}
\label{sec:background}
We next introduce some background knowledge about quantum computation which is necessary for understanding the content of this paper. 

\subsection{Basic Concepts of Quantum Computation}
\label{subsec:basic_concepts}

A \textit{qubit} is the fundamental unit of quantum computation. Like the classical bit has values 0 and 1, a qubit also has two \textit{basis states} with the form $\left|0\right>$ and $\left|1\right>$. However, a qubit is allowed to contain a \textit{superposition} between basis states, with the general state being a linear combination of $\left|0\right>$ and $\left|1\right>$: $a\left|0\right>+b\left|1\right>$, where $a$ and $b$ are two complex numbers called \textit{amplitudes} that satisfy $|a|^2+|b|^2=1$. The amplitudes describe the proportion of $\left|0\right>$ and $\left|1\right>$. For multiple qubits, the basis states are analogous to binary strings. For example, a two-qubit system has four basis states: $\left|00\right>$, $\left|01\right>$, $\left|10\right>$ and $\left|11\right>$, and the general state is $a_{00}\left|00\right>+a_{01}\left|01\right>+a_{10}\left|10\right>+a_{11}\left|11\right>$, where $|a_{00}|^2+|a_{01}|^2+|a_{10}|^2+|a_{11}|^2=1$. The state can also be written as a column vector $[a_{00},a_{01},a_{10},a_{11}]^T$, which is called \textit{state vector}. Generally, the quantum state on $n$ qubits can be represented as a state vector $\left|\psi\right>$ in \textit{Hilbert Space} of $d$ dimension, where $d=2^n$.

The states mentioned above are all \textit{pure states}, which means they are not probabilistic. Sometimes a quantum system may have a probability distribution over several pure states rather than a certain state, which we call a \textit{mixed state}. Suppose a quantum system is in state $\left|\psi_i\right>$ with probability $p_i$. The completeness of probability gives that $\sum_i p_i = 1$. We can denote the mixed state as the \textit{ensemble representation}: $\{(p_i, \left|\psi_i\right>)\}$.

Besides state vectors, the \textit{density matrix} or \textit{density operator} is another way to express a quantum state, which is convenient for expressing mixed states. Suppose a mixed state has the ensemble representation $\{(p_i, \left|\psi_i\right>)\}$, the density matrix of this state is $\rho = \sum_i p_i \left|\psi_i\right>\left<\psi_i\right|$, where $\left<\psi_i\right|$ is the conjugate transpose of $\left|\psi_i\right>$ (thus it is a row vector and $\left|\psi_i\right>\left<\psi_i\right|$ is a $d\times d$ matrix). Owing that the probability $p_i\geq 0$ and $\sum_i p_i = 1$, a lawful and complete density matrix $\rho$ should satisfy: (1) $tr(\rho) = 1$ and (2) $\rho$ is a positive matrix\footnote{For any column vector $\left|\alpha\right>$, $\left<\alpha\right|\rho\left|\alpha\right> \geq 0$}. $\rho$ represents a pure state if and only if $tr(\rho^2) = 1$~\cite{nielsen2002quantum}. Obviously, The density matrix of a pure state $\left|\phi\right>$ can be written as $\left|\phi\right>\left<\phi\right|$. In this paper, we will use both state vector and density matrix representations.

\subsection{Evolution of Quantum States}
\label{subsec:quantum_op}

Quantum computing is performed by applying proper \textit{quantum gates} on qubits. An $n$-qubit quantum gate can be represented by a $2^n \times 2^n$ unitary matrix $G$. Applying gate $G$ on a state vector $\left|\psi\right>$ will obtain state vector $G\left|\psi\right>$. Moreover, applying $G$ on a density matrix $\rho$ will obtain density matrix $G \rho G^{\dagger}$, where $G^{\dagger}$ is the \textit{conjugate transpose} of $G$. For a unitary transform, $G^\dagger = G^{-1}$, where $G^{-1}$ is the \textit{inverse} of $G$.

There are several basic quantum gates, such as single-qubit gates $X$, $Z$, $S$, $H$, and two-qubit gate CNOT. The matrices of them are:

\begin{small}
$$
X=\left[\begin{array}{cc}
0 & 1\\
1 & 0
\end{array}
\right], \quad
Z=\left[\begin{array}{cc}
1 & 0\\
0 & -1
\end{array}
\right], \quad
S=\left[\begin{array}{cc}
1 & 0\\
0 & i
\end{array}
\right],
$$
$$
H=\frac{1}{\sqrt{2}} \left[\begin{array}{cc}
1 & 1\\
1 & -1
\end{array}
\right], \quad
\mathrm{CNOT}=\left[\begin{array}{cccc}
1&0&0&0\\
0&1&0&0\\
0&0&0&1\\
0&0&1&0
\end{array}
\right]
$$
\end{small}

In quantum devices, the information of qubits can only be obtained by \textit{measurement}. Measuring a quantum system will get a classical value with the probability of corresponding amplitude. Then the state of the quantum system will collapse into a basis state according to obtained value. For example, measuring a qubit $a\left|0\right>+b\left|1\right>$ will obtain result 0 and collapse into state $\left|0\right>$ with probability $|a|^2$; and obtain result '1' and collapse into state $\left|1\right>$ with probability $|b|^2$. This property brings uncertainty and influences the testing of quantum programs.

A quantum circuit is a popular model to express the process of quantum computing. Every line represents a qubit in a quantum circuit model, and a sequence of operations is applied from left to right. Figures~\ref{fig:swap} and \ref{fig:swaptest} are two examples of quantum circuits.


\subsection{Quantum Operations and Quantum Programs}


If a quantum program is composed only by quantum gates, it will perform a unitary tranform on the input state. However, measurements will destory the unitarity.

\textit{Quantum operation} is a general mathematical model that can describe the general evolution of a quantum system~\cite{nielsen2002quantum}. It is a linear map on the space of the density matrix. In fact, both quantum gates and measurements can be represented by quantum operations. Suppose we have an input density matrix $\rho$ ($d \times d$ matrix). After an evolution through quantum operation $\mathcal{E}$, the density matrix becomes $\mathcal{E}(\rho)$. $\mathcal{E}(\rho)$ can be represented as the \textit{operator-sum representation}~\cite{nielsen2002quantum}: $\mathcal{E}(\rho) = \sum_i E_i \rho E_i^{\dagger}$, where $\{E_i\}$ is a group of $d \times d$ matrices. If $\sum_i E_i^{\dagger} E_i = I$, then $\mathcal{E}$ is called a \textit{trace-preserving} quantum operation.

In developing quantum programs, programmers use quantum gates, measurements, and control statements to implement some quantum algorithms. A quantum program usually transforms the quantum input state into another state. An important fact is that a quantum program with \textit{if} statements and \textit{while-loop} statements, where the \textit{if} and \textit{while-loop} conditions can contain the result of measuring qubits, can be represented by a quantum operation~\cite{ying2016foundationQP}. So, quantum operations formalism is a powerful tool that can be used by our testing methods to model the state transformations of quantum programs under test. In addition, a program will eventually terminate if and only if it corresponds to a trace-preserving quantum operation~\cite{ying2016foundationQP}. This paper focuses only on trace-preserving quantum operations, i.e., the quantum programs that always terminate.

\begin{figure}
\centering
    \includegraphics[scale=0.6]{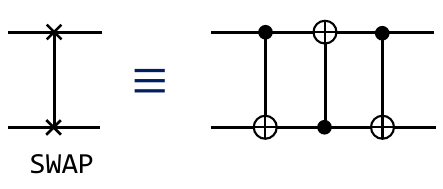}
    \caption{A SWAP gate that can be implemented by three CNOT gates.}
    \label{fig:swap}
    \centering
    \includegraphics[scale=0.6]{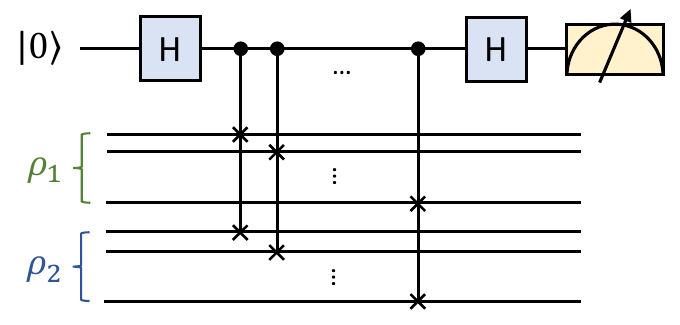}
    \caption{The quantum circuit of Swap Test.}
    \label{fig:swaptest}
\end{figure}

\subsection{Swap Test}
\label{subsec:swap_test}

The Swap Test~\cite{buhrman2001quantum,barenco1997stabilization} is a procedure in quantum computation that allows us to determine how similar two quantum states are. Figure~\ref{fig:swaptest} shows the quantum circuit of the Swap Test. It includes two (single or multi-qubit) registers carrying the states $\rho_1$ and $\rho_2$ and a single ancillary qubit initialized in state $\left|0\right>$.
It contains a series of "Controlled-SWAP" gates on each pair of corresponding qubits of target states $\rho_1$ and $\rho_2$, and the ancilla qubit is used for controlling. SWAP gate can be implemented as three CNOT gates as shown in Figure~\ref{fig:swap}. There are two $H$ gates on the ancilla qubit before and after these Controlled-SWAP gates, respectively. The ancilla qubit is measured at the end, and the probability that result '1' occurs, denoted as $p_1$, is related to states $\rho_1$ and $\rho_2$ as shown in the following formula (\ref{equ:SwapTest}):

\begin{equation}
\label{equ:SwapTest}
tr(\rho_1\rho_2) = 1-2p_1
\end{equation}

Note that the probability $p_1$ can be estimated by repeating running the Swap Test and counting the proportion of obtaining result '1'. Based on this process, we can estimate parameter $tr(\rho_1\rho_2)$ and then obtain more useful information on states $\rho_1$ and $\rho_2$. 
In particular, if we set $\rho_1 = \rho_2 = \rho$, we can estimate $tr(\rho^2)$ and use it to determine whether $\rho$ is a pure or mixed state. Also, if we let $\rho_1$ and $\rho_2$ be two pure states, where $\rho_1 = \left|\alpha\right>\left<\alpha\right|$ and $\rho_2 = \left|\beta\right>\left<\beta\right|$, then we have $tr(\rho_1\rho_2) = \left|\left<\alpha|\beta\right>\right|^2$, where $\left<\alpha|\beta\right>$ represents the \textit{inner product} of two state $\left|\alpha\right>$ and $\left|\beta\right>$. We call $\left|\alpha\right>$ and $\left|\beta\right>$ are \textit{orthogonal}, denoted as $\left|\alpha\right> \bot \left|\beta\right>$, if $\left<\alpha|\beta\right> = 0$.

\subsection{Quantum Tomography}
\label{subsec:quantum_tomo}

Quantum tomography~\cite{d2003quantum} is used for obtaining the details of a quantum state or operation. Owing that measurement may collapse a quantum state, we need many copies of the target quantum state or repeat the target quantum operation many times to reconstruct the target. \textit{State tomography} reconstructs the information of a quantum state, and \textit{process tomography} reconstructs the information of a quantum operation.

In quantum tomography, The following four \textit{Pauli matrices} are important:

\begin{small}
$$
\label{equ:paulis}
\sigma_0 = \left[
\begin{array}{cc}
	1 & 0\\
	0 & 1
\end{array}
\right], \qquad
\sigma_1 = \left[
\begin{array}{cc}
	0 & 1\\
	1 & 0
\end{array}
\right],
$$
$$
\sigma_2 = \left[
\begin{array}{cc}
	0 & -i\\
	i & 0
\end{array}
\right], \qquad
\sigma_3 = \left[
\begin{array}{cc}
	1 & 0\\
	0 & -1
\end{array}
\right]
$$
\end{small}

\noindent 
All $\sigma_1$, $\sigma_2$ and $\sigma_3$ have two eigenvalues: $-1$ and $1$. $\sigma_3$ has eigenstates $\left|0\right> = [1, 0]^T$ (for eigenvalue $1$) and $\left|1\right> = [0, 1]^T$ (for eigenvalue $-1$). $\sigma_1$ has eigenstates $\left|+\right> = \frac{1}{\sqrt 2}(\left|0\right>+\left|1\right>)$ (for $1$) and $\left|-\right> = \frac{1}{\sqrt 2}(\left|0\right>-\left|1\right>)$ (for $-1$). $\sigma_2$ has eigenstates $\left|+_i\right> = \frac{1}{\sqrt 2}(\left|0\right>+i\left|1\right>)$ (for $1$) and $\left|-_i\right> = \frac{1}{\sqrt 2}(\left|0\right>-i\left|1\right>)$ (for $-1$). $\sigma_0$ has only one eigenvalue $1$, and any single-qubit state is its eigenstate.

An important fact is that the four Pauli matrices form a group of bases of the space of the single-qubit density matrix. Similarly, the tensor product of Pauli matrices $\sigma_{\vec{v}} = \sigma_{v_1}\otimes \sigma_{v_2} \otimes \cdots \otimes \sigma_{v_n}$ form a basis of an $n$-qubit density matrix, where $v_i \in \{0,1,2,3\}$, $\vec{v} = (v_1,\dots,v_n)$.
So a density matrix $\rho$ can be represented as the linear combination of Pauli basis~\cite{nielsen2002quantum}:

\begin{equation}
\label{equ:RhoDecomp}
\rho = \sum_{\vec{v}} \frac{tr(\sigma_{\vec{v}}\rho)}{2^n}\sigma_{\vec{v}},
\end{equation}

\noindent where the combination coefficient for basis matrix $\sigma_{\vec{v}}$ is $tr(\sigma_{\vec{v}}\rho)/2^n$. In practice, $tr(\sigma_{\vec{v}}\rho)$ can be estimated by \textit{Pauli measurement} of Pauli matrix $\sigma_{\vec{v}}$. We can conduct experiments to observe each $tr(\sigma_{\vec{v}}\rho)$ on $\rho$ and then use formula (\ref{equ:RhoDecomp}) to reconstruct the density matrix of $\rho$.

A quantum operation $\mathcal{E}$ is a linear transformation on the input space, so it depends on the behavior of a group of bases. We can use state tomography to reconstruct the corresponding output density matrices of all basis input states, and then $\mathcal{E}$ can be reconstructed by these output density matrices~\cite{nielsen2002quantum}.

\section{Questions and Motivations}
\label{sec:questions}
We next outline the key questions addressed in this paper and provide a comprehensive rationale for each question, supported by relevant scenarios. By doing so, we aim to highlight the motivation and necessity behind investigating these specific aspects.

\subsection{Modelling Black-Box Quantum Programs}
\label{subsec:model}

In this paper, we focus on black-box testing, where we do not require prior knowledge of the internal structure of the target program. Instead, we achieve the testing objective by observing the program's outputs when provided with suitable inputs. This approach allows us to evaluate the program's behavior without relying on specific implementation details.

The beginning of our discussion is to model a black-box quantum program. In~\cite{ali2021assessing}, Ali~\textit{et al.} gives some definitions of a quantum program, input, output, and program specification to guide the generation of test cases. They define the program specification as the expected probability distribution of the classical output values under the given inputs. As~\cite{long2022process} points out, this definition is inherently classical, implying that the program ends with measurements to transform quantum information into classical probabilities. It is more reasonable to deal directly with quantum information rather than transform them into classical probabilities. Accordingly, the quantum input and output of the program can be modeled by quantum states (state vectors or density matrices).

According to Section~\ref{subsec:quantum_op}, a quantum program with \textit{if} statements and \textit{while-loop} statements can be represented by a quantum operation~\cite{ying2016foundationQP}. So in this paper, we use the quantum operation model to represent a black-box quantum program. A black-box quantum program with quantum input variable $\rho$ can be represented by an unknown quantum operation $\mathcal{E}$, and $\mathcal{E}(\rho)$ represents the corresponding output under input $\rho$. The research on the properties of quantum programs can be transformed into research on quantum operations.

\subsection{Questions}

We introduce the key questions addressed in this paper, namely \textit{equivalence checking}, \textit{identity checking}, and \textit{unitarity checking}. While a quantum program can involve classical inputs and outputs, our focus in this paper is solely on the quantum aspects. Specifically, we aim to examine the checking of quantum inputs and outputs while keeping other classical parameters fixed. We outline the three questions addressed in this paper as follows:

\begin{question}
\textbf{Equivalence Checking}

Given two quantum programs $\mathcal{P}_1$ and $\mathcal{P}_2$, how can we determine whether they are equivalent, meaning they produce the same quantum output for the same quantum input?
\end{question}

\begin{question}
\textbf{Identity Checking}
Given a quantum program $\mathcal{P}$, how can we check whether it represents an identity transform where the input qubits remain unchanged?
\end{question}

\begin{question}
\textbf{Unitarity Checking}

Given a quantum program $\mathcal{P}$, how can we determine whether $\mathcal{P}$ represents a unitary transform?
\end{question}

\subsection{Motivations}

Next, we discuss the motivation and necessity of each question. We will present several scenarios highlighting the practical application of equivalence, identity, and unitary checking in the context of black-boxing testing of quantum programs. We consider these three questions together because they involve typical relations and share similar problem-solving frameworks. Identity checking can be seen as a special case of equivalence checking, with one program being the target $\mathcal{P}$ and the other being the identity operator $I$. However, we also examine identity checking independently for two reasons: firstly, identity relations are among the most common, and secondly, it may be possible to develop a more efficient algorithm for identity checking compared to equivalence checking.

The equivalence of two programs holds substantial importance, as it is a prevalent relationship. Even when testing classical programs, checking the equivalence of two programs is a fundamental testing task. For instance, in the following Scenario~\ref{scn:update}, one of the typical applications of equivalence checking is to guarantee the correctness of an updated program version. While white-box checking for quantum circuit equivalence has been explored in several existing studies~\cite{viamontes2007checking,yamashita2010fast,burgholzer2020advanced,hong2021approximate,hong2022eqchk_dynamic,wang2022eqchk_seq}, our focus in this paper is on black-box checking.

\begin{scenario}
\label{scn:update}
\textsf{Correctness Assurance in Version Update}

One important application of equivalence checking is to ensure the correctness of an updated version of a quantum program in regression testing. Let us consider a scenario where we have an original program $\mathcal{P}$, and after a few months, we develop an updated version $\mathcal{P'}$ by making improvements such as optimizing the program's structure to enhance its execution speed. In this case, ensuring that these modifications do not alter the program's behavior becomes crucial, meaning that $\mathcal{P'}$ should be equivalent to $\mathcal{P}$. By performing an \textbf{equivalence checking}, we can verify if the updated version $\mathcal{P'}$ maintains the same functionality as the original version $\mathcal{P}$.

\end{scenario}

The classical counterpart to identity checking, which involves verifying whether a classical program reproduces its input, lacks practical significance. Identity checking finds significance in the context of quantum programs because creating a \textit{inverse variant} of the original quantum program is a common practice in quantum programming. Many quantum programs inherently involve performing a unitary transformation on the quantum input state. As detailed in Section~\ref{subsec:quantum_op}, any unitary transformation $U$ possesses an inverse denoted as $U^{-1}$, and they satisfy the relation $UU^{-1}=I$, where $I$ denotes the identity transformation. Scenario~\ref{scn:inverse} illustrates how identity checking can be employed to test the inverse variant of the original quantum program.

\begin{scenario}
\label{scn:inverse}
\textsf{Testing the Inverse Variant}

Suppose we have completed a testing task for an original program \texttt{P}. Now, we proceed to implement the reverse version \texttt{invP} of \texttt{P}. While some quantum programming languages, such as Q\#\cite{svore2018q} and isQ\cite{Guo2022isQ}, offer mechanisms for generating \texttt{invP} automatically, these mechanisms come with some restrictions on the target program. For instance, if \texttt{P} contains \textit{if} statements associated with classical input parameters, the quantum behavior of \texttt{P} is a unitary transform for any fixed classical parameters, implying the existence of the inverse \texttt{invP}. 
However, these languages may not have the ability to automatically generate \texttt{invP} as their mechanisms are limited to dealing with simple programs. In such cases, manual implementation is required. As a result, effective testing becomes crucial during the manual implementation process.

To test \texttt{invP}, as proposed in \cite{long2022process}, we only need to check the following relation:

\begin{center}
$\texttt{P} \circ \texttt{invP} = I$
\end{center}

\noindent where `$\circ$' represents the sequential execution (from right to left) of the two subroutines. This task involves \textbf{checking the identity relationship} between the composite program and the identity program ($I$).

\end{scenario}

Besides the inverse variant, as mentioned in \cite{long2022process}, there are also two other variants of an original program \texttt{P} - \textit{controlled variants} \texttt{CtrlP} and \textit{power variants} \texttt{PowP}. For example, the testing process for power variants can be reduced to check the following relations:

\begin{center}
For $k>0$,\quad $\texttt{InvP}^k \circ \texttt{PowP}(k) = I$;

For $k<0$,\quad $\texttt{P}^{|k|} \circ \texttt{PowP}(k) = I$.
\end{center}




The motivation for unitarity checking is based on the fact that many practical quantum programs are unitary transformations, which means that they do not contain measurements. If the unitarity checking fails for these programs, we can know that there exist some unexpected measurements in them. In fact, unitarity checking reflects the unique properties of quantum programs, and there is no classical counterpart to this type of checking. Scenario~\ref{scn:specification} gives another case where we need to employ unitarity checking.

\begin{scenario}
\label{scn:specification}
\textsf{Checking the Program Specification}

Checking whether the program output conforms to the program specification is a crucial step in testing. However, unlike classical programs, where the output can be directly observed, checking the output of quantum programs under arbitrary inputs can be challenging. Fortunately, as discussed in \cite{long2022process}, if the intended behavior of the program is to perform a unitary transform, we can simplify the checking process. It involves:

\begin{itemize}
\item[(1)] Checking the output under classical input states;
\item[(2)] Performing additional \textbf{unitarity checking} on the program.
\end{itemize}
\end{scenario}

In practice, it may appear that testing all possible quantum inputs is necessary, yet the number of quantum inputs is infinite. Fortunately, the linearity of quantum operations enables us to examine only a finite subset of all potential inputs. Moreover, our objective is testing rather than rigorous mathematical proof. Consequently, we only need to select a small, appropriate set of test cases to validate.


\section{Checking Algorithms}
\label{sec:methods}

We next introduce concrete checking algorithms to solve these questions mentioned in Section~\ref{sec:questions}. We discuss the theoretical foundations for each algorithm first and then give concrete algorithms. We denote $n$ as the number of qubits of target programs. According to the discussion in Section~\ref{subsec:model}, we use quantum operation as the equivalent of a quantum program in the following sections.

\subsection{Implementation of Swap Test}
\label{subsec:ImplST}

First, we discuss the implementation of the Swap Test, as it is the fundamental of checking algorithms. The quantum circuit for the Swap Test is shown in Figure~\ref{fig:swaptest}. According to the formula (\ref{equ:SwapTest}), a useful parameter is the probability of returning result '1'. So our implementation returns the number of occurrences of result '1', with a given number of repeat $s$ of the Swap Test. The initial states $\rho_1$ and $\rho_2$ are generated by two input subroutines $\mathcal{P}_1$ and $\mathcal{P}_2$.

The implementation of the Swap Test is shown in Algorithm~\ref{Alg:SwapTest}. Lines $3\sim 4$ initialize quantum variables, where $\left|0\right>^{\otimes n}$ means $n$-qubit all-zero state. Line 5 prepare the input states by $\mathcal{P}_1$ and $\mathcal{P}_2$ on quantum variables \textbf{qs1} and \textbf{qs2}, respectively. Lines $6\sim 8$ implement a series of controlled-SWAP operations in Figure~\ref{fig:swaptest}. Line $10$ measures the ancilla qubit, and lines $11\sim 13$ accumulate the number of results 1. Owing that there are two layers of nested loops (lines $2$ and $6$) and the iteration numbers are $s$ and $n$, respectively, the time complexity of Algorithm~\ref{Alg:SwapTest} is $O(ns)$. As line $3$, it requires $2n+1$ qubits.

\begin{algorithm}
\small
\caption{\texttt{SwapTest}}
\label{Alg:SwapTest}

\KwIn{($n$, $s$, $\mathcal{P}_1$, $\mathcal{P}_2$) \\ \quad $n$: the number of qubits of one target state; \\ \quad $s$: the number of rounds; \\ \quad $\mathcal{P}_1, \mathcal{P}_2$: two subroutines to generate two target states.}
\KwOut{$s_1$: Number of occurrences of result '1'.}

$s_1 \leftarrow 0$\;
\For{i = 1 to $s$}{
	\textbf{qanc} $\leftarrow \left|0\right>$; \textbf{qs1} $\leftarrow \left|0\right>^{\otimes n}$; \textbf{qs2} $\leftarrow \left|0\right>^{\otimes n}$\;
	$H(\textbf{qanc})$\;
	$\mathcal{P}_1(\textbf{qs1})$; $\mathcal{P}_2(\textbf{qs2})$\;
	\For{j = 0 to $n-1$}
	{
		Controlled SWAP(\textbf{qanc}, (\textbf{qs1}[j], \textbf{qs2}[j]))\;
	}
	$H(\textbf{qanc})$\;
	$m \leftarrow$ Measure(\textbf{qanc})\;
	\If{$m = 1$}{$s_1 \leftarrow s_1 + 1$\;}
}
\Return{$s_1$}\;
\end{algorithm}

\subsection{Equivalence Checking}
\label{subsec:EqCheck}

\subsubsection{Theoretical foundations}
A straightforward idea to perform the equivalence checking is to use the quantum process tomography~\cite{d2003quantum} (see Section~\ref{subsec:quantum_tomo}) for two target programs and compare the reconstruction results of these programs. However, quantum process tomography is costly and may contain much unnecessary, redundant information for equivalence checking. However, we can construct the equivalence checking algorithm based on the idea of quantum process tomography - the behavior of a quantum operation $\mathcal{P}$ depends on its behavior under a group of bases of the input density matrix space.

In practical, a common choice of the bases is the Pauli bases $\{\sigma_{\vec{v}}\}$, where $\sigma_{\vec{v}} = \sigma_{v_1}\otimes \sigma_{v_2} \otimes \cdots \otimes \sigma_{v_n}$ is the tensor product of Pauli matrices, $v_i \in \{0,1,2,3\}$, and $\vec{v} = (v_1,\dots,v_n)$. In other words, $\mathcal{P}(\rho)$, the output of quantum operation $\mathcal{P}$ under input $\rho$, depends on every $\mathcal{P} (\sigma_{\vec{v}})$, and $\sigma_{\vec{v}}$ can be further decomposed into the sum of its eigenstates $\sigma_{\vec{v}} = \sum_i{\lambda_i^{\vec{v}}\left|\psi_i^{\vec{v}}\right>\left<\psi_i^{\vec{v}}\right|}$, where $\lambda_i^{\vec{v}}$ is the $i$-th eigenvalue of $\sigma_{\vec{v}}$ and $\left|\psi_i^{\vec{v}}\right>$ is the corresponding eigenstate. So the behavior of a quantum operation depends on the output on every input $\left|\psi_i^{\vec{v}}\right>$. Specifically, $\left|\psi_i^{\vec{v}}\right>$ is the tensor product of single-qubit Pauli eigenstate\footnote{Any single-qubit state is the eigenstate of $\sigma_0$, so we only need to consider the eigenstates of $\sigma_1$, $\sigma_2$, and $\sigma_3$.}:

\begin{equation}
\label{equ:multipauli}
\left|\psi_i^{\vec{v}}\right> \in \{ \left|0\right>, \left|1\right>, \left|+\right>, \left|-\right>, \left|+_i\right>, \left|-_i\right> \} ^ {\otimes n} = A
\end{equation}

\noindent where states $\left|+\right>$, $\left|-\right>$, $\left|+_i\right>$ and $\left|-_i\right>$ are defined in Section~\ref{subsec:quantum_tomo}. Thus, the behavior of a quantum operation depends on the output under every input state in set $A$. We call the states in set $A$ as \textit{Pauli input states}. Note that no matter what Pauli index vector $\vec{v}$ is, state $\left|\psi_i^{\vec{v}}\right>$ has the form of formula (\ref{equ:multipauli}). It deduces that two quantum operations are equivalent if and only if their outputs are equivalent on all $6^n$ Pauli input states\footnote{Actually, the degree of freedom for an $n$-qubit density matrix is $4^n$, indicating that some Pauli states lack independence. Nevertheless, this is inconsequential for our methods, as they only require sampling a small subset of Pauli input states.}.

Commonly, the equivalent of two states $\rho_1$ and $\rho_2$ can be judged by \textit{trace distance} $D(\rho_1,\rho_2) = \frac{1}{2}tr|\rho_1-\rho_2|$ or \textit{fidelity} $F(\rho_1,\rho_2) = tr\sqrt{\sqrt{\rho_1}\rho_2\sqrt{\rho_1}}$~\cite{nielsen2002quantum}. $\rho_1 = \rho_2$ if and only if $D(\rho_1,\rho_2)=0$ or $F(\rho_1,\rho_2)=1$. However, these two parameters are complex to estimate in practice. For a testing task, we need a parameter that is experimentally easy to be obtained. Consider the \textit{Cauchy Inequality} of density matrix $tr(\rho_1\rho_2) \leq \sqrt{tr(\rho_1^2) tr(\rho_2^2)} $, and the \textit{Mean Value Inequation} $\sqrt{xy} \leq \frac{x+y}{2}$, we have

\begin{equation}
\label{equ:Cauchy}
tr(\rho_1\rho_2) \leq \frac{tr(\rho_1^2) + tr(\rho_2^2)}{2}
\end{equation}

\noindent with equality if and only if $\rho_1=\rho_2$. We can construct a parameter:

\begin{equation}
\label{equ:E}
 E(\rho_1,\rho_2) = \left|\frac{tr(\rho_1^2)+tr(\rho_2^2)}{2} - tr(\rho_1\rho_2)\right| 
 \end{equation}

\noindent If $\rho_1=\rho_2$, then $E(\rho_1,\rho_2) = 0$; otherwise $E(\rho_1,\rho_2) > 0$. So given two quantum programs $\mathcal{P}_1$ and $\mathcal{P}_2$, we can estimate $E( \mathcal{P}_1 (\left|\psi_i\right>\left<\psi_i\right| ),$ $ \mathcal{P}_2(\left|\psi_i\right>\left<\psi_i\right| ) )$, where $\left|\psi_i\right>$ is taken over all Pauli eigenstates. To estimate $E(\rho_1, \rho_2)$, we need to estimate $tr(\rho_1\rho_2)$, $tr(\rho_1^2)$ and $tr(\rho_2^2)$, and they can be finished by Swap Test (see Section~\ref{subsec:swap_test}). Suppose we repeat $s$ times for input pairs $(\rho_1, \rho_1)$, $(\rho_2, \rho_2)$, $(\rho_1, \rho_2)$, and there exist $s_1$, $s_2$, and $s_{12}$ times obtaining result '1', respectively. According to formula (\ref{equ:SwapTest}), $tr(\rho_1^2) \approx 1 - \frac{2 s_{1}}{s}$, $tr(\rho_2^2) \approx 1 - \frac{2 s_{2}}{s}$, and $ tr(\rho_1\rho_2) \approx 1 - \frac{2 s_{12}}{s}$. By substituting them into the formula (\ref{equ:E}), we have

\begin{equation}
\label{equ:Et1t2t12}
E(\rho_1, \rho_2) \approx \left|\frac{2s_{12}-s_1-s_2 }{s} \right| = \tilde{E} 
\end{equation}

\noindent
Denote the experimental estimated value of $E(\rho_1, \rho_2)$ as $\tilde{E}$. It seems that we can simply compare $\tilde{E}$ with 0. However, since there will be some errors in the experiment, a better solution is to give a tolerance error $\epsilon$. If $\tilde{E} \leq \epsilon$, which means $\tilde{E}$ is close to 0, we think that $\rho_1$ and $\rho_2$ are equivalent.

There are $6^n$ Pauli eigenstates. 
Fortunately, in program testing tasks, we can tolerate small errors, so we just need to test only a small fraction of the input pairs instead of all of them, i.e., only $k$ input states, where $k \ll 6^n$. We call the checking process for each input a \textit{test point}. For each input state $\left|\psi_i\right>$, we estimate $\tilde{E}$ and return FAIL whenever there is at least one case such that $\tilde{E} > \epsilon$. PASS is returned only when all test points satisfy $\tilde{E} \leq \epsilon$.

\subsubsection{Algorithm}
According to the above discussion, given two target quantum programs $\mathcal{P}_1$ and $\mathcal{P}_2$, the idea of equivalence checking algorithm includes the following three steps:

\begin{itemize}
\item[(1)] Randomly select and generate a set of Pauli input states.
\item[(2)] For each input state $\left|\psi\right>$, run $\mathcal{P}_1$ and $\mathcal{P}_2$, and use Swap Test to check the equivalence of their output states.
\item[(3)] As long as one output pair is not equivalent, return FAIL; otherwise, if all output pairs are equivalent, return PASS.
\end{itemize}

In practice, we can represent each Pauli input state using an $n$-element integer array, where each integer ranges from $0$ to $5$ and corresponds to the single-qubit states defined in formula (\ref{equ:multipauli}). For instance, if $n=3$, the array $K=\left[1,2,3\right]$ represents the input state $\left|\psi_K\right>=\left|1\right>\left|+\right>\left|-\right>$. Constructing the unitary transform $G_K$ to generate state $\left|\psi_K\right>$ from the all-zero state $\left|000\right>$ is straightforward: it involves applying the gates $X$, $H$, and $HX$\footnote{Applying $X$ gate first and then applying $H$ gate.} to the 1st, 2nd, and 3rd qubits, respectively. In fact, the six single-qubit Pauli eigenstates in formula~(\ref{equ:multipauli}) can be generated by applying the gates $I$, $X$, $H$, $HX$, $SH$, and $S^{-1}H$ to the state $\left|0\right>$, respectively.

The algorithm for equivalence checking is presented in Algorithm~\ref{Alg:EqCheck}. We execute $k$ test points (testing $k$ input states, line 1). In each iteration of the loop, a random integer array $K$ for the input state is generated (line 2), and the corresponding unitary transform to generate Pauli state $\left|\psi_K\right>$ is denoted as $G_K$. Notably, the \texttt{SwapTest} subroutine requires two parameters related to generation subroutines for identifying the target states. $\mathcal{P}_1 \circ G_K$ represents a composite subroutine that executes $G_K$ and $\mathcal{P}_1$ successively, thereby generating the state that results from applying $\mathcal{P}_1$ to the input produced by $G_K$. The same applies to $\mathcal{P}_2 \circ G_K$. These two composite subroutines, $\mathcal{P}_1 \circ G_K$ and $\mathcal{P}_2 \circ G_K$, serve as input parameters for the \texttt{SwapTest} function. Consequently, lines $3\sim 5$ return the results of the Swap Test for the following three pairs of states:


\begin{small}
\begin{itemize}
    \item Line 3: $\mathcal{P}_1(\left|\psi_K\right>\left<\psi_K\right|)$, $\mathcal{P}_1(\left|\psi_K\right>\left<\psi_K\right|)$
    \item Line 4: $\mathcal{P}_2(\left|\psi_K\right>\left<\psi_K\right|)$, $\mathcal{P}_2(\left|\psi_K\right>\left<\psi_K\right|)$
    \item Line 5: $\mathcal{P}_1(\left|\psi_K\right>\left<\psi_K\right|)$, $\mathcal{P}_2(\left|\psi_K\right>\left<\psi_K\right|)$ 
\end{itemize}
\end{small}

\noindent
and the results are assigned to $s_1$, $s_2$, and $s_{12}$ denoted in formula (\ref{equ:Et1t2t12}), respectively.
Line $6$ calculates parameter $\tilde{E}$ by formula (\ref{equ:Et1t2t12}). Lines $7\sim 9$ compare $\tilde{E}$ with $\epsilon$, and return FAIL if $\tilde{E}$ is over the limit of $\epsilon$.

The loop is repeated for $k$ rounds (line 2). In each round, three Swap Tests are executed, each requiring $O(ns)$ time and $2n+1$ qubits. So the time complexity of Algorithm~\ref{Alg:EqCheck} is $O(nks)$ and requires $2n+1$ qubits.

\begin{algorithm}
\small
\caption{\texttt{EquivalenceChecking\_original}}
\label{Alg:EqCheck}
\KwIn{($n$, $k$, $s$, $\epsilon$, $\mathcal{P}_1$, $\mathcal{P}_2$) \\ \quad $n$: the number of qubits of the target program; \\ \quad $k$: the number of test points; \\  \quad $s$: the number of rounds of Swap Test; \\ \quad $\epsilon$: tolerance error; \\ \quad $\mathcal{P}_1, \mathcal{P}_2$: two target programs.}
\KwOut{PASS or FAIL.}

\For{i = 1 to k}{
	$K \leftarrow$ Randomly choose a $n$-element integer array, where each integer is in $0,\dots,5$\;
	
	$s_{1} \leftarrow$ \texttt{SwapTest}($n$, $s$, $\mathcal{P}_1 \circ G_K$, $\mathcal{P}_1 \circ G_K$)\;
	
	$s_{2} \leftarrow$ \texttt{SwapTest}($n$, $s$, $\mathcal{P}_2 \circ G_K$, $\mathcal{P}_2 \circ G_K$)\;
	
	$s_{12} \leftarrow$ \texttt{SwapTest}($n$, $s$, $\mathcal{P}_1 \circ G_K$, $\mathcal{P}_2 \circ G_K$)\;
	
	$E\leftarrow (2s_{12} - s_1 - s_2) / s$\;
	
	\If{$|E| > \epsilon$}{\Return{FAIL}\;}
}
\Return{PASS}\;
\end{algorithm}

\subsection{Identity Checking}
\label{subsec:IDcheck}

It seems that identity checking is a special case of equivalence checking - whether the target program \texttt{P} is equivalent to identity program $I$. Fortunately, identity checking has some good properties to avoid repeating running the Swap Test. Thus, it has a faster algorithm than equivalence checking.

Just like equivalence checking discussed in Section~\ref{subsec:EqCheck}, we consider Pauli input state $\left|\psi_K\right>$, which is generated by a unitary transform $G_K$ from all-zero state, where $K$ is the integer array to represent the input Pauli state. Suppose the target program is $\mathcal{P}$. If $\mathcal{P}$ is identity, it keeps any input state, and thus applying the inverse transform $G_K^{-1}$ will recover the state back into the all-zero state. Then, measuring the state will always obtain a result of 0. In other words, after this process, if there is a non-zero result occurs, the target program $\mathcal{P}$ is not identity. Just like the equivalence checking, we test only a small subset (size $k$) of input states for identity checking. During the checking, FAIL is returned whenever one non-zero result occurs; otherwise, PASS is returned.

The algorithm for identity checking is shown in Algorithm~\ref{Alg:IdCheck}.
The algorithm first generates a random integer array $K$ for the input state in line 2 and then initializes the quantum variable \textbf{qs} in line 3. After that, it executes the subroutines $G_p$, $\mathcal{P}$, and $G_p^{-1}$ on \textbf{qs} (line 4) and makes a measurement on \textbf{qs} (line 5). Finally, the algorithm judges whether the measurement result is not 0 (lines $6\sim 8$), and if so, returns FAIL immediately. PASS is returned only when all measurement results are 0 (line 10).

The loop is repeated for $k$ rounds (line 2). So the time complexity of Algorithm~\ref{Alg:IdCheck} is $O(nk)$ and requires $n$ qubits. Compared with equivalence checking (Algorithm~\ref{Alg:EqCheck}), owing to the Swap Test being avoided in identity checking, it is faster and requires fewer qubits. In fact, this algorithm may be faster for checking a non-identity target program due to the immediate return of non-zero measurement results.

\begin{algorithm}
\small
\caption{\texttt{IdentityChecking}}
\label{Alg:IdCheck}
\KwIn{($n$, $k$, $\mathcal{P}$) \\ \quad $n$: the number of qubits of the target program; \\ \quad $k$: the number of test points; \\ \quad $\mathcal{P}$: the target program.}
\KwOut{PASS or FAIL.}

\For{i = 1 to k}{
	$K \leftarrow$ Randomly choose a $n$-element integer array, where each integer is in $0,\dots,5$\;
	\textbf{qs} $\leftarrow \left|0\right>^{\otimes n}$\;
	$G_K$(\textbf{qs});
	$\mathcal{P}$(\textbf{qs});
	$G_K^{-1}$(\textbf{qs})\;
	$r \leftarrow$ Measure(\textbf{qs})\;
	\If{$r \neq 0$}{\Return{FAIL}\;}
}
\Return{PASS}\;
\end{algorithm}

\subsection{Unitarity Checking}
\label{subsec:Ucheck}

\subsubsection{Theoretical foundations}
Note that a unitarity transform exhibits two obvious properties: (1) it preserves the inner product of two input states, and (2) it maintains the purity of an input pure state.  In the context of black-box testing, specifically for unitarity checking, we present a novel theorem that serves as a guiding principle for performing such checking. The formalization of this theorem is provided below.

\begin{theorem}
\label{TheoremUcheck}
Let $\mathcal{H}$ be a $d$-dimensional Hilbert space and \\$\{\left|1\right>,\dots,\left|d\right>\}$ is a group of standard orthogonal basis of $\mathcal{H}$. Let $\left|+_{mn}\right> = \frac{1}{\sqrt{2}}(\left|m\right>+\left|n\right>)$ and $\left|-_{mn}\right> = \frac{1}{\sqrt{2}}(\left|m\right>-\left|n\right>)$, where $m,n=1,\dots,d$ and $m \neq n$. A quantum operation $\mathcal{E}$ is a unitary transform if and only if it satisfies:

\begin{itemize}
\item[(1)] $\forall m \neq n$,  $tr\left[ \mathcal{E}(\left|m\right>\left<m\right|) \mathcal{E}(\left|n\right>\left<n\right|) \right] = 0$;

\item[(2)] for $d-1$ combinations of $(m,n), m \neq n$ which form the edges of a connected graph with vertices $1,\dots,d$, \\
$tr\left[ \mathcal{E}(\left|+_{mn}\right>\left<+_{mn}\right|) \mathcal{E}(\left|-_{mn}\right>\left<-_{mn}\right|) \right] = 0$.
\end{itemize}
\end{theorem}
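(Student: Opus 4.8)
The plan is to establish necessity quickly and then concentrate on sufficiency, which is where the real content lies. For necessity, suppose $\mathcal{E}(\rho) = U\rho U^\dagger$ for some unitary $U$. Then $tr[\mathcal{E}(|m\rangle\langle m|)\mathcal{E}(|n\rangle\langle n|)] = tr[U|m\rangle\langle m|U^\dagger U|n\rangle\langle n|U^\dagger] = |\langle m|n\rangle|^2 = 0$ for $m \neq n$, which gives (1); the same computation applied to $|+_{mn}\rangle$ and $|-_{mn}\rangle$ gives $|\langle +_{mn}|-_{mn}\rangle|^2 = 0$, so (2) holds for every pair, in particular for any spanning tree.

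For sufficiency, the key observation is that condition (1) forces each $\mathcal{E}(|m\rangle\langle m|)$ to be a pure state, and moreover these pure states are mutually orthogonal. I would argue this in two steps. First, since $\mathcal{E}$ is trace-preserving, $tr[\mathcal{E}(|m\rangle\langle m|)] = 1$ for all $m$; consider the state $\rho = \frac{1}{d}\sum_m \mathcal{E}(|m\rangle\langle m|) = \mathcal{E}(I/d)$. Using condition (1) to kill the cross terms, $tr(\rho^2) = \frac{1}{d^2}\sum_m tr[\mathcal{E}(|m\rangle\langle m|)^2]$. Since each $tr[\mathcal{E}(|m\rangle\langle m|)^2] \le 1$ with equality iff $\mathcal{E}(|m\rangle\langle m|)$ is pure, and since $\rho$ lives in a $d$-dimensional space so $tr(\rho^2) \ge 1/d$, a counting/extremality argument pins down $tr(\rho^2) = 1/d$ and forces every $\mathcal{E}(|m\rangle\langle m|)$ to be pure. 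Write $\mathcal{E}(|m\rangle\langle m|) = |\phi_m\rangle\langle\phi_m|$; then condition (1) says $|\langle\phi_m|\phi_n\rangle|^2 = 0$, i.e. $\{|\phi_m\rangle\}$ is an orthonormal set, hence an orthonormal basis of the $d$-dimensional output space. This suggests defining the candidate unitary $U$ by $U|m\rangle = |\phi_m\rangle$ (up to phases to be fixed later).

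The remaining work is to show $\mathcal{E}$ actually equals conjugation by this $U$ on all inputs, not just on the diagonal basis states, and this is where condition (2) and the connectivity hypothesis enter. By linearity of $\mathcal{E}$ it suffices to determine $\mathcal{E}(|m\rangle\langle n|)$ for all $m \neq n$. For a fixed edge $(m,n)$ of the connected graph, apply the purity argument again to $\mathcal{E}(|+_{mn}\rangle\langle +_{mn}|)$: condition (2) together with $\mathcal{E}(|+_{mn}\rangle\langle +_{mn}|) + \mathcal{E}(|-_{mn}\rangle\langle -_{mn}|) = \mathcal{E}(|m\rangle\langle m|) + \mathcal{E}(|n\rangle\langle n|) = |\phi_m\rangle\langle\phi_m| + |\phi_n\rangle\langle\phi_n|$ forces $\mathcal{E}(|\pm_{mn}\rangle\langle\pm_{mn}|)$ to be orthogonal pure states spanning the same 2-dimensional subspace as $|\phi_m\rangle, |\phi_n\rangle$; hence $\mathcal{E}(|+_{mn}\rangle\langle +_{mn}|) = |\psi\rangle\langle\psi|$ with $|\psi\rangle = \frac{1}{\sqrt 2}(|\phi_m\rangle + e^{i\theta_{mn}}|\phi_n\rangle)$ for some phase. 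Expanding $\mathcal{E}(|+_{mn}\rangle\langle +_{mn}|) = \frac12(\mathcal{E}(|m\rangle\langle m|) + \mathcal{E}(|n\rangle\langle n|) + \mathcal{E}(|m\rangle\langle n|) + \mathcal{E}(|n\rangle\langle m|))$ and matching with the rank-one expression determines $\mathcal{E}(|m\rangle\langle n|) + \mathcal{E}(|n\rangle\langle m|)$, i.e. the Hermitian part; the anti-Hermitian part can be recovered analogously using $|+_i\rangle$-type combinations — or, more economically, one notes that operator-sum positivity of $\mathcal{E}$ (it is a legitimate quantum operation) forces $\mathcal{E}(|m\rangle\langle n|) = e^{i\theta_{mn}}|\phi_m\rangle\langle\phi_n|$ outright once the subspace structure is known. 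Now the phases: along the spanning tree, redefine each $|\phi_m\rangle$ by absorbing the accumulated phases so that $\theta_{mn} = 0$ on all tree edges; connectivity guarantees every vertex is reachable from a fixed root, so this is consistent. With these phase choices $U|m\rangle = |\phi_m\rangle$ gives $\mathcal{E}(|m\rangle\langle n|) = U|m\rangle\langle n|U^\dagger$ for every tree edge, and since the tree edges together with the diagonal elements $|m\rangle\langle m|$ span the whole operator space (each $|m\rangle\langle n|$ for a non-edge pair is reached by composing along the tree path, using that $\mathcal{E}$ is linear and that $|m\rangle\langle n| \cdot |n\rangle\langle p| / \langle n|n\rangle$-type identities propagate the relation — here one uses that the $|\phi_m\rangle$ form a basis so products behave correctly), we conclude $\mathcal{E}(\rho) = U\rho U^\dagger$ on all $\rho$, i.e. $\mathcal{E}$ is unitary.

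I expect the main obstacle to be the phase-propagation step: showing that the local phases $\theta_{mn}$ on the tree edges can be gauged away consistently (straightforward, since a tree has no cycles), but more delicately that the off-tree matrix elements $|m\rangle\langle n|$ are correctly reproduced — this requires propagating the relation through the tree and invoking the complete-positivity of $\mathcal{E}$ to rule out spurious relative phases on non-adjacent pairs. The purity-forcing argument via $tr(\rho^2) = 1/d$ is the other step that needs care to state rigorously (it is essentially the equality case of $\sum p_i^2 \le \max p_i$ combined with the rank bound), but it is routine once set up.
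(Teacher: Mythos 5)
Your overall strategy is genuinely different from the paper's: the paper argues at the level of Kraus operators (condition (1) makes every $E_i^\dagger E_j$ diagonal in the computational basis, condition (2) together with connectivity makes its diagonal entries equal, so $E_i^\dagger E_j \propto I$ and all Kraus operators are multiples of a single unitary), whereas you argue at the level of output states and, implicitly, the Choi matrix. Your first half is sound: condition (1) plus trace preservation does force the images $\mathcal{E}(|m\rangle\langle m|)$ to be mutually orthogonal pure states $|\phi_m\rangle\langle\phi_m|$ (most directly because $tr(AB)=0$ for $A,B\ge 0$ means orthogonal supports, and $d$ mutually orthogonal nonzero supports in a $d$-dimensional space must each be one-dimensional), and similarly $\mathcal{E}(|\pm_{mn}\rangle\langle\pm_{mn}|)$ are orthogonal pure states inside $\mathrm{span}\{|\phi_m\rangle,|\phi_n\rangle\}$.

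The genuine gap is the off-diagonal step. Conditions (1)--(2) and linearity only constrain the Hermitian combination $\mathcal{E}(|m\rangle\langle n|)+\mathcal{E}(|n\rangle\langle m|)$, and at the point where you assert the equal-weight form $\frac{1}{\sqrt 2}(|\phi_m\rangle+e^{i\theta_{mn}}|\phi_n\rangle)$ nothing yet excludes, say, $\mathcal{E}(|+_{mn}\rangle\langle+_{mn}|)=|\phi_m\rangle\langle\phi_m|$. Your first repair, recovering the anti-Hermitian part ``analogously using $|+_i\rangle$-type combinations,'' is not available: the theorem supplies no hypothesis about such inputs. Your second repair, complete positivity, is the right tool but is only named, and it is exactly where the remaining work lies: positivity of the Choi matrix $\sum_{m,n}|m\rangle\langle n|\otimes\mathcal{E}(|m\rangle\langle n|)$, whose diagonal blocks are the rank-one projectors $|\phi_m\rangle\langle\phi_m|$, forces $\mathcal{E}(|m\rangle\langle n|)=c_{mn}|\phi_m\rangle\langle\phi_n|$ with $(c_{mn})$ positive semidefinite, unit diagonal, $|c_{mn}|\le 1$; condition (2) then gives $|c_{mn}|=1$ on the chosen edges, and a Gram-representation/Cauchy--Schwarz-equality argument propagates this along the connected graph to $c_{mn}=\bar z_m z_n$ with $|z_m|=1$ for all pairs, whence $\mathcal{E}(\rho)=U\rho U^\dagger$. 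Your alternative propagation mechanism via ``product identities'' $|m\rangle\langle n|\cdot|n\rangle\langle p|$ does not work, since $\mathcal{E}$ is linear but not multiplicative. So your route is completable and arguably more geometric, but the decisive step (the CP structure of the off-diagonal blocks and the phase propagation to non-tree pairs) is missing as written; the paper's Kraus-operator argument handles connectivity in one line precisely because it never has to reconstruct $\mathcal{E}$ on off-diagonal inputs.
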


The proof of Theorem~\ref{TheoremUcheck} is presented in Appendix~\ref{apdsec:ProofUcheck}. It is important to note that $\left|m\right> \bot \left|n\right>$ and $\left|+_{mn}\right> \bot \left|-_{mn}\right>$ hold true when $m\neq n$. Intuitively, Theorem~\ref{TheoremUcheck} establishes that a quantum operation qualifies as a unitary transform if and only if it maps two orthogonal states into two orthogonal states. Essentially, the \textit{preservation of orthogonality} emerges as the fundamental characteristic of unitary transforms. For a more comprehensive exploration of these properties and their impact on the unitarity estimation of quantum channels, Kean et al.~\cite{kean2022unitarity} have conducted an in-depth study.

According to Theorem~\ref{TheoremUcheck}, determining whether a quantum program $\mathcal{P}$ qualifies as a unitary transform necessitates estimating parameters of the form $tr\left( \mathcal{P}(\rho_1)\mathcal{P}(\rho_2) \right)$, where $\rho_1$ and $\rho_2$ denote input states. These input pairs $(\rho_1$, $\rho_2)$ should encompass two categories: (a) pairs of computational basis states $(\left|m\right>, \left|n\right>)$, where $ m \neq n$, and (b) pairs of superposition states $(\left|+_{mn}\right>,\left|-_{mn}\right>)$, where $m\neq n$. Similar to equivalence checking and identity checking, we need to test only a small subset of all possible input pairs, ensuring coverage of both type (a) and type (b). In practical implementation, we employ the Swap Test for evaluating $tr\left( \mathcal{P}(\rho_1)\mathcal{P}(\rho_2) \right)$. Let us assume there are a total of $s$ rounds for the Swap Test on each input, with $s_1$ rounds yielding the result '1'. Based on formula (\ref{equ:SwapTest}), when $tr\left( \mathcal{P}(\rho_1)\mathcal{P}(\rho_2) \right) = 0$, we have $r = 1-\frac{2s_1}{s} = 0$. In practice, we set a tolerance value $\epsilon$. Our testing approach returns a FAIL outcome if there exists a test point where $|r|>\epsilon$, and it returns a PASS only if all test points satisfy $|r|\leq\epsilon$.

\subsubsection{Algorithm}

Based on the preceding discussion, the unitarity checking algorithm for a target quantum program $\mathcal{P}$ involves the following three main steps:

\begin{itemize}
\item[(1)] Randomly generate two types of input state pairs: (a) pairs of computational basis states $(\left|m\right>, \left|n\right>)$, where $ m \neq n$; (b) pairs of superposition states $(\left|+_{mn}\right>,\left|-_{mn}\right>)$, where $m\neq n$. Ensure that each type constitutes approximately half of the total number of test points $k$.

\item[(2)] For each input pair $(\rho_1,\rho_2)$ created in step (1), apply $\mathcal{P}$ to both input states, resulting in the output pair $(\mathcal{P}(\rho_1),\mathcal{P}(\rho_2))$. Employ the Swap Test to determine whether $tr(\mathcal{P}(\rho_1)\mathcal{P}(\rho_2))$ equals 0.

\item[(3)] If any group of output pairs $(\mathcal{P}(\rho_1),\mathcal{P}(\rho_2))$ yields $tr(\mathcal{P}(\rho_1)\mathcal{P}(\rho_2)) \neq 0$, the algorithm reports FAIL; otherwise, if all output pairs yield $tr(\mathcal{P}(\rho_1)\mathcal{P}(\rho_2)) = 0$, it reports PASS.
\end{itemize}

The concrete method is presented in Algorithm~\ref{Alg:UnCheck}. The "if" statement in line 2 enacts the classification sampling. Lines 3 to 5 deal with input states of type (b), while lines 7 to 8 manage input states of type (a). In our implementation, for type (b), we select $n$, which is the bitwise negation of $m$, to ensure the superposition covers all qubits (as per the SCAQ criterion in~\cite{long2022process}). The generation processes of computational basis states $\left|m\right>$ and $\left|n\right>$ are denoted as $C_m$ and $C_n$, respectively, while the generation processes of superposition states $\left|+_{mn}\right>$ and $\left|-_{mn}\right>$ are denoted as $S_{m,n}^{+}$ and $S_{m,n}^{-}$. Similar to Algorithm~\ref{Alg:EqCheck}, the notation $\mathcal{P}\circ \mathcal{X}$ signifies the subroutine that executes $\mathcal{P}$ on the input state generated by $\mathcal{X}$. Consequently, lines 5 and 8 implement the required Swap Test on the output pair $(\mathcal{P}(\rho_1),\mathcal{P}(\rho_2))$ to obtain $s_1$ for two different cases, respectively. Line 10 calculates the parameter $tr(\mathcal{P}(\rho_1)\mathcal{P}(\rho_2))$, and line 11 determines whether it equals 0 with a tolerance $\epsilon$. If it does not, it immediately returns FAIL (line 12).

The loop is repeated for $k$ rounds (line 2). In each round, three Swap Tests are executed, each requiring $O(ns)$ time and $2n+1$ qubits. So the time complexity of Algorithm~\ref{Alg:UnCheck} is $O(nks)$ and requires $2n+1$ qubits.

\begin{algorithm}
\small
\caption{\texttt{UnitarityChecking\_original}}
\label{Alg:UnCheck}
\KwIn{($n$, $k$, $s$, $\epsilon$, $\mathcal{P}$) \\ \quad $n$: the number of qubits of the target program; \\ \quad $k$: the number of test points; \\ \quad $s$: the number of rounds of Swap Test; \\ \quad $\epsilon$: tolerance error; \\ \quad $\mathcal{P}$: the target program.}
\KwOut{PASS or FAIL.}

\For{i = 1 to k}{
    \eIf{$i \leq \lceil k/2 \rceil$}
    {
    	$m \leftarrow$ Randomly choose a number from 0 to $2^n-1$\;
    	$n \leftarrow$ Bitwise negation of $m$\;
    	$s_1 \leftarrow$ SwapTest($n$, $s$, $\mathcal{P}\circ S_{m,n}^{+}$, $\mathcal{P}\circ S_{m,n}^{-}$)\;
    }
    {
    	$m, n \leftarrow$ Randomly choose two different numbers from 0 to $2^n-1$\;
    	$s_1 \leftarrow$ SwapTest($n$, $s$, $\mathcal{P}\circ C_m$, $\mathcal{P}\circ C_n$)\;
    }
    $r \leftarrow 1 - (2s_1 / t)$\;
	\If{$|r| > \epsilon$}{\Return{FAIL}\;}
}
\Return{PASS}\;
\end{algorithm}

\subsection{Parameter Selection}
\label{subsec:parameter}

In Algorithms~\ref{Alg:EqCheck}, \ref{Alg:IdCheck}, and \ref{Alg:UnCheck}, several parameters need to be selected by users - number of test points $k$, number of rounds of Swap Test $s$, and the tolerance error $\epsilon$ (the latter two parameters are required in equivalence checking and unitarity checking because they are based on Swap Test). In this section, we discuss how these parameters influence the effectiveness of our testing methods and how to select these parameters.

Due to quantum programs' nondeterministic nature, testing algorithms may output wrong results. Consider that the target program has two statuses: \textit{Correct} or \textit{Wrong}, and the output also has two statuses: \textit{PASS} or \textit{FAIL}. So, there are two types of wrong results:

\begin{itemize}
\item \textbf{Error Type I:} A wrong program passes;
\item \textbf{Error Type II:} A correct program fails.
\end{itemize}

The general principle of parameter selection is to balance the accuracy and running time. To increase the accuracy, we should try to control the probabilities of both two types of errors.

In identity checking, the unique parameter is the number of test points $k$. Because an identity program always returns a result 0 in identity checking, a type II error will not occur. For a non-identity program, the average probability of returning result 0, denoted as $p$, satisfies $0<p<1$. Taking $k$ test points, the probability of type I error is $p^k$. It means that with the increase of $k$, the probability of type I error decreases exponentially. Therefore, selecting a moderately sized value for $k$ is sufficient to control the probability of errors.

In the equivalence and unitarity checking, there are three parameters: number of test points $k$, number of rounds of Swap Test $s$, and tolerance $\epsilon$. They are based on statistics, and both type I and type II errors can possibly occur due to the estimation errors. Intuitively, larger $k$ (testing more points) and smaller $\epsilon$ (more strict judgment condition for the correctness) are helpful to reduce type I error. However, wrong programs are diverse, and we cannot know any information about errors before testing. So, the selection of $k$ and $\epsilon$ is more empirical. In Section~\ref{sec:Experments}, we will find an appropriate selection of $k$ and $\epsilon$ according to some benchmark programs.

If we choose a smaller $\epsilon$, i.e., the judgment condition for the correctness is more strict, we need to improve the accuracy of the Swap Test to avoid type II error, i.e., select a larger $s$. Fortunately, owing to the behavior of the correct program is unique, given $k$ and $\epsilon$, the parameter $s$ can be calculated before testing. We need an extra parameter $\alpha_2$, the required maximum probability of type II error. Then, we have the following proposition:


\begin{proposition}
\label{prop:SelectT}
In equivalence checking or unitarity checking, suppose we have selected the number of test points $k$ and tolerance $\epsilon$. Given the allowed probability of type II error $\alpha_2$. If the number of rounds of Swap Test $s$ satisfies:

\begin{itemize}
\item[(1)] $s \geq \frac{8}{\epsilon^2}\ln \frac{2}{1-(1-\alpha_2)^{\frac{1}{k}}}$ for equivalence checking, or

\item[(2)] $s \geq \frac{2}{\epsilon^2 \ln{2}}\ln \frac{1}{1-(1-\alpha_2)^{\frac{1}{k}}}$ for unitarity checking,
\end{itemize}

\noindent then the probability of type II error will not exceed $\alpha_2$.
\end{proposition}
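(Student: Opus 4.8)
\textbf{Proof proposal for Proposition~\ref{prop:SelectT}.}

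The plan is to analyze a single test point, bound the probability that the correct program fails at that point using a concentration inequality, and then combine the $k$ independent test points. Consider first the equivalence checking case. When the target programs are genuinely equivalent, every Pauli input state $\left|\psi_K\right>$ produces $\rho_1 = \rho_2 = \mathcal{P}_1(\left|\psi_K\right>\left<\psi_K\right|)$, so the true value of the estimated quantity $E$ is $0$, and $\tilde{E} = |2s_{12}-s_1-s_2|/s$ where each of $s_1,s_2,s_{12}$ is a sum of $s$ i.i.d.\ Bernoulli random variables. A type~II error at this point occurs exactly when $\tilde{E} > \epsilon$. First I would write $\tilde{E}$ as $|\hat{a}|$ where $\hat{a} = 2\hat{p}_{12} - \hat{p}_1 - \hat{p}_2$ with $\hat{p}_\bullet = s_\bullet/s$ the empirical frequencies, each having expectation equal to the corresponding true probability; since $\mathbb{E}[\hat{a}] = 2p_{12}-p_1-p_2 = -E(\rho_1,\rho_2) = 0$ in the equivalent case, $\hat a$ is a mean-zero average. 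The cleanest route is to view $\hat a$ as a single average over $s$ rounds: in round $j$ define $Y_j = 2 b_{12}^{(j)} - b_1^{(j)} - b_2^{(j)}$ where $b_\bullet^{(j)} \in \{0,1\}$; then $Y_j \in [-2,2]$ and $\hat a = \frac{1}{s}\sum_j Y_j$, so Hoeffding's inequality gives $\Pr[|\hat a| > \epsilon] \leq 2\exp(-2s\epsilon^2/(2\cdot 2)^2) = 2\exp(-s\epsilon^2/8)$. (One subtlety: the three Swap Tests in Algorithm~\ref{Alg:EqCheck} are run as separate batches, so $b_1^{(j)}, b_2^{(j)}, b_{12}^{(j)}$ are not literally produced in a common round $j$; but since all rounds are mutually independent we may arbitrarily pair them up into $s$ triples without changing any distribution, which legitimizes the $Y_j$ construction — I would state this explicitly.)

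Next I would convert the per-point failure bound into the batch requirement. Writing $q$ for an upper bound on the probability that a correct program fails at one test point, the probability that it passes all $k$ independent test points is at least $(1-q)^k$, so the type~II error probability is at most $1 - (1-q)^k$. Requiring $1 - (1-q)^k \le \alpha_2$ is equivalent to $q \le 1 - (1-\alpha_2)^{1/k}$. For equivalence checking we have $q \le 2\exp(-s\epsilon^2/8)$, so it suffices to enforce $2\exp(-s\epsilon^2/8) \le 1 - (1-\alpha_2)^{1/k}$, i.e.\ $s \ge \frac{8}{\epsilon^2}\ln\frac{2}{1-(1-\alpha_2)^{1/k}}$, which is exactly part~(1).

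For the unitarity checking case the argument is the same shape but the per-round random variable is simpler. At a correct (unitary) program and a valid test pair, Theorem~\ref{TheoremUcheck} guarantees $tr(\mathcal{P}(\rho_1)\mathcal{P}(\rho_2)) = 0$, hence the true value of $r = 1 - 2s_1/s$ has mean $0$ and $s_1$ is a binomial $(s, p)$ variable with $1-2p = 0$, i.e.\ $p = 1/2$. Thus $r = \frac{1}{s}\sum_j (1 - 2b^{(j)})$ is an average of i.i.d.\ variables in $\{-1,1\}$ with mean $0$. Hoeffding here gives $\Pr[|r|>\epsilon] \le 2\exp(-2s\epsilon^2/4) = 2\exp(-s\epsilon^2/2)$. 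Plugging $q \le 2\exp(-s\epsilon^2/2)$ into $q \le 1-(1-\alpha_2)^{1/k}$ yields $s \ge \frac{2}{\epsilon^2}\ln\frac{2}{1-(1-\alpha_2)^{1/k}}$. This is close to but not identical with the stated part~(2), which reads $s \ge \frac{2}{\epsilon^2 \ln 2}\ln\frac{1}{1-(1-\alpha_2)^{1/k}}$; the discrepancy is presumably because the authors use a one-sided Hoeffding bound together with a base-$2$ rather than base-$e$ bookkeeping, or a slightly different tail estimate (e.g.\ a Chernoff bound for the symmetric binomial giving $\Pr[|r|>\epsilon]\le 2^{1-s\epsilon^2/2}$-type forms). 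I would reconcile the constants by using whichever tail inequality reproduces the $\ln 2$ factor — most likely bounding $\Pr[|r| > \epsilon]$ by $2^{-s\epsilon^2/2}$ via a Chernoff/Hoeffding variant for $\pm1$ variables and then solving $2^{-s\epsilon^2/2} \le 1-(1-\alpha_2)^{1/k}$.

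The main obstacle is not the concentration step itself — that is routine Hoeffding — but getting the constants in parts~(1) and~(2) to match the paper exactly, which requires identifying the precise tail inequality the authors intended (two-sided vs.\ one-sided, natural log vs.\ $\log_2$, and whether the $Y_j$ range is taken as $[-2,2]$ or handled componentwise via a union bound over the three Swap-Test batches). A secondary point that must be handled carefully is the independence bookkeeping: justifying that the $k$ test points are independent (they use fresh randomness and fresh qubits) so that the product bound $(1-q)^k$ is valid, and that within a test point the separate Swap-Test batches can be repackaged into i.i.d.\ per-round contributions. Once the right inequality is pinned down, the rest is substitution.
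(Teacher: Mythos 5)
Your treatment of part (1) is essentially the paper's own proof: you pair the rounds of the three Swap-Test batches into per-round variables $w_i=2z_i-x_i-y_i\in[-2,2]$ (the paper does exactly this, without even remarking on the pairing subtlety you flag), apply two-sided Hoeffding to get a per-point failure probability at most $2e^{-s\epsilon^2/8}$, and then use independence of the $k$ test points to require $2e^{-s\epsilon^2/8}\le 1-(1-\alpha_2)^{1/k}$, which is the stated bound. That part is correct and matches the paper step for step, including the $k$-point bookkeeping $1-P_2^k\le\alpha_2$.

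Part (2), however, is not finished as written. The paper does not use Hoeffding there: it notes $E(s_1)=s/2$ for a unitary program and invokes a Chernoff-type bound of the form $\Pr\bigl(|s_1-E(s_1)|\ge \epsilon E(s_1)\bigr)\le 2^{-s\epsilon^2/2}$, i.e.\ $\Pr(|r|\ge\epsilon)\le 2^{-s\epsilon^2/2}$ with no leading factor $2$, and then the same $1-P_2^k\le\alpha_2$ step gives $s\ge\frac{2}{\epsilon^2}\log_2\frac{1}{1-(1-\alpha_2)^{1/k}}$, which is exactly the stated $\frac{2}{\epsilon^2\ln 2}\ln\frac{1}{1-(1-\alpha_2)^{1/k}}$. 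You correctly guessed this is the intended route, but you left it as a conjecture and instead proved the different sufficient condition $s\ge\frac{2}{\epsilon^2}\ln\frac{2}{1-(1-\alpha_2)^{1/k}}$. That does not by itself establish the proposition: writing $B=\frac{1}{1-(1-\alpha_2)^{1/k}}$, your threshold exceeds the stated one whenever $\ln B<\frac{(\ln 2)^2}{1-\ln 2}\approx 1.57$ (i.e.\ $B\lesssim 4.8$, which occurs for large $\alpha_2$ and small $k$), so in that regime $s\ge\frac{2}{\epsilon^2\ln 2}\ln B$ is not covered by your bound and the claim with the paper's constant remains unproved. For the paper's practical setting ($\alpha_2=0.1$, so $B\ge 10$) your condition is actually the weaker one and does imply the statement, but to prove part (2) as stated for all admissible $\alpha_2,k$ you must actually establish and use the $2^{-s\epsilon^2/2}$ tail bound for the symmetric binomial (or an equivalent inequality reproducing the $\ln 2$), rather than Hoeffding's $2e^{-s\epsilon^2/2}$.
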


As a practical example, Table~\ref{tab:T} gives the selection of $s$ on several groups of $k$ and $\epsilon$ under $\alpha_2 = 0.1$. The lower bound of $s$ seems complicated, but we can prove:

\begin{equation}
\label{equ:Tbound}
\frac{k}{-\ln(1-\alpha_2)} \leq \frac{1}{1-(1-\alpha_2)^\frac{1}{k}} \leq  \frac{k}{\alpha_2}
\end{equation}

\noindent where $k\geq 1$ and $0<\alpha_2<1$. It deduces:

\begin{equation}
\label{equ:ThetaK}
\frac{1}{1-(1-\alpha_2)^\frac{1}{k}} = \Theta(k)
\end{equation}

\noindent The proofs of Proposition~\ref{prop:SelectT} and formula (\ref{equ:Tbound}) are shown in Appendix~\ref{apdsec:ProofT}. Proposition~\ref{prop:SelectT} and formula (\ref{equ:ThetaK}) deduce the following corollary about the asymptotic relations of $s$ and the overall time complexity of Algorithms~\ref{Alg:EqCheck} and~\ref{Alg:UnCheck}:

\begin{corollary}
\label{cor:t}
In both equivalence checking and unitarity checking, for a fixed $\alpha_2$, if we choose $s$ according to the lower bound in Proposition~\ref{prop:SelectT}, then $s = \Theta\left(\frac{\log k}{\epsilon^2}\right)$. Further more, when adopting this choice, the time complexities of Algorithm~\ref{Alg:EqCheck} and~\ref{Alg:UnCheck} are both $O\left(\frac{nk \log k}{\epsilon^2}\right)$.
\end{corollary}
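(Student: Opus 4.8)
The plan is to establish Corollary~\ref{cor:t} as a direct consequence of Proposition~\ref{prop:SelectT} and formula~(\ref{equ:ThetaK}), so the work is purely asymptotic bookkeeping. First I would treat the equivalence-checking case: by Proposition~\ref{prop:SelectT}(1), the chosen value is $s = \left\lceil \frac{8}{\epsilon^2}\ln\frac{2}{1-(1-\alpha_2)^{1/k}} \right\rceil$. Writing $\ln\frac{2}{1-(1-\alpha_2)^{1/k}} = \ln 2 + \ln\frac{1}{1-(1-\alpha_2)^{1/k}}$ and invoking formula~(\ref{equ:ThetaK}), the second term is $\ln\Theta(k) = \Theta(\log k)$ (for $k \geq 2$; the $k=1$ case is trivially $O(1)$). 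Since $\alpha_2$ is fixed, the $\ln 2$ term is an additive constant and is absorbed, giving $\ln\frac{2}{1-(1-\alpha_2)^{1/k}} = \Theta(\log k)$, hence $s = \Theta\!\left(\frac{\log k}{\epsilon^2}\right)$. The unitarity-checking case is identical in structure: Proposition~\ref{prop:SelectT}(2) gives $s = \left\lceil \frac{2}{\epsilon^2\ln 2}\ln\frac{1}{1-(1-\alpha_2)^{1/k}} \right\rceil$, and again formula~(\ref{equ:ThetaK}) yields $\ln\frac{1}{1-(1-\alpha_2)^{1/k}} = \Theta(\log k)$, so $s = \Theta\!\left(\frac{\log k}{\epsilon^2}\right)$ as well.

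For the time-complexity claim, I would recall from Section~\ref{subsec:EqCheck} that Algorithm~\ref{Alg:EqCheck} runs in $O(nks)$ time, and likewise Algorithm~\ref{Alg:UnCheck} from Section~\ref{subsec:Ucheck} runs in $O(nks)$ time. Substituting the bound $s = \Theta\!\left(\frac{\log k}{\epsilon^2}\right)$ — in particular its upper half $s = O\!\left(\frac{\log k}{\epsilon^2}\right)$ — directly gives $O(nks) = O\!\left(\frac{nk\log k}{\epsilon^2}\right)$ for both algorithms, which is the stated conclusion.

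The only genuinely substantive input is formula~(\ref{equ:ThetaK}), i.e. that $\frac{1}{1-(1-\alpha_2)^{1/k}} = \Theta(k)$, which the excerpt defers to Appendix~\ref{apdsec:ProofT} via the sandwich bound~(\ref{equ:Tbound}); I would cite it rather than reprove it here. The one mild subtlety to handle carefully is that $\Theta(\cdot)$ notation is being used with $\epsilon$ present as a second free parameter: the statement should be read as "for each fixed $\epsilon$, as $k \to \infty$," or equivalently with the implied constants independent of both $k$ and $\epsilon$ (which is what the explicit formulas deliver, since $\epsilon$ enters only through the clean prefactor $1/\epsilon^2$). I would state this reading explicitly in one sentence so that the passage from $s = \Theta(\log k / \epsilon^2)$ to the time bound is unambiguous. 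No step here is a real obstacle; the proof is essentially a two-line substitution once Proposition~\ref{prop:SelectT} and~(\ref{equ:ThetaK}) are in hand.
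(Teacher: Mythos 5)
Your proposal is correct and follows essentially the same route the paper intends: the paper presents the corollary as an immediate deduction from Proposition~\ref{prop:SelectT}, formula~(\ref{equ:ThetaK}), and the $O(nks)$ running-time statements for Algorithms~\ref{Alg:EqCheck} and~\ref{Alg:UnCheck}, which is exactly the substitution you carry out. Your extra remarks on the $k=1$ case and on reading the $\Theta$-notation with $\epsilon$ as a separate parameter are harmless clarifications, not deviations.
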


\begin{table}
\centering
\caption{The choice of $s$ with different $k$ and $\epsilon$ under $\alpha_2 = 0.1$}
\label{tab:T}
\resizebox{\linewidth}{!}{
\begin{tabular}{c|cccc|cccc}
	\toprule
	& \multicolumn{4}{c|}{Equivalence Check} & \multicolumn{4}{c}{Unitarity Check}\\
	\hline
	\diagbox{$k$}{$\epsilon$} & 0.05 & 0.10 & 0.15 & 0.20 & 0.05 & 0.10 & 0.15 & 0.20\\
	\midrule
	1 & 9587 & 2397 & 1066 & 600 &-&-&-&-\\
	2 & 11722 & 2931 & 1303 & 733 & 3428 & 857 & 381 & 215\\
	3 & 12991 & 3248 & 1444 & 812 & 3886 & 972 & 432 & 243\\
	4 & 13898 & 3475 & 1545 & 869 & 4213 & 1054 & 469 & 264\\
	6 & 15181 & 3796 & 1687 & 949 & 4676 & 1169 & 520 & 293\\
	10 & 16804 & 4202 & 1868 & 1051 & 5261 & 1316 & 585 & 329\\
	\bottomrule
\end{tabular}
}
\end{table}

In Section~\ref{sec:Experments}, we will also provide experimental research on the parameters $k$, $s$, and $\epsilon$.

\section{Optimization for Equivalence and Unitarity Checking}
\label{sec:optimization}

\subsection{Optimization Ideas}
\label{subsec:optidea}
The original algorithms for equivalence checking and unitarity checking (Algorithms~\ref{Alg:EqCheck} and~\ref{Alg:UnCheck}) involve repeated running the Swap Test many times to obtain the indicating parameters of equivalence and unitarity. In contrast, identity checking (Algorithm~\ref{Alg:IdCheck}) immediately returns FAIL upon encountering a non-zero result. Thus, it enhances the efficiency of discovering bugs. This observation prompts us to seek similar properties for equivalence and unitarity checking that allow us to confirm the existence of errors immediately upon a certain result rather than waiting for all results.

As discussed in Section~\ref{subsec:swap_test}, formula (\ref{equ:SwapTest}) about Swap Test gives $tr(\rho_1\rho_2) = 1 - 2p_1$. So $tr(\rho_1\rho_2)=1$ if and only if $p_1$, the probability of obtaining result '1', equals 0. In other words, as long as a result '1' occurs, we can immediately conclude that $tr(\rho_1\rho_2)\neq 1$. Thus, we introduce Algorithm~\ref{Alg:TrABequ1}, to fast determine whether $tr(\rho_1\rho_2) = 1$. The main body of Algorithm~\ref{Alg:TrABequ1} is the same as Algorithm~\ref{Alg:SwapTest} about Swap Test. The difference is that Algorithm~\ref{Alg:TrABequ1} returns FALSE immediately when result '1' occurs (lines $10\sim 12$), while Algorithm~\ref{Alg:SwapTest} accumulates the number of results '1'.


Note that both equivalence and unitarity checking are based on the Swap Test. To optimize them using Algorithm~\ref{Alg:TrABequ1}, it is necessary to find their properties related to whether $tr(\rho_1\rho_2)$ equals 1. Obviously, there are two related properties:

\begin{itemize}
\item[(1)] The \textit{purity checking} for a state:

By setting $\rho_1=\rho_2=\rho$, Algorithm~\ref{Alg:TrABequ1} is able to determine whether $tr(\rho^2)=1$, thereby assessing whether $\rho$ represents a pure state.

\item[(2)] The \textit{equality checking} for two pure states:

Let $\rho_1 = \left|\alpha\right>\left<\alpha\right|$ and $\rho_2 = \left|\beta\right>\left<\beta\right|$ are two pure states, then $tr(\rho_1\rho_2) = \left|\left<\alpha|\beta\right>\right|^2$, and it equals 1 if and only if $\left|\alpha\right>=\left|\beta\right>.$
\end{itemize}

In the following part of this section, we will identify and examine the properties of equivalence and unitarity checking and delve into the strategies for optimizing these properties using Algorithm~\ref{Alg:TrABequ1}.

\begin{algorithm}
\small
\caption{\texttt{is\_TrAB\_equals\_1} }
\label{Alg:TrABequ1}
\KwIn{($n$, $s$, $\mathcal{P}_2$, $\mathcal{P}_2$) \\ \quad $n$: the number of qubits of one target state; \\ \quad $s$: the number of rounds; \\ \quad $\mathcal{P}_1, \mathcal{P}_2$: subroutines to generate target states $\rho_1$ and $\rho_2$.}
\KwOut{TRUE or FALSE about whether $tr(\rho_1\rho_2)=1$.}

\For{i = 1 to $s$}{
	\textbf{qanc} $\leftarrow \left|0\right>$; \textbf{qs1} $\leftarrow \left|0\right>^{\otimes n}$; \textbf{qs2} $\leftarrow \left|0\right>^{\otimes n}$\;
	$H(\textbf{qanc})$\;
	$\mathcal{P}_1(\textbf{qs1})$; $\mathcal{P}_2(\textbf{qs2})$\;
	\For{j = 0 to $n-1$}
	{
		Controlled SWAP(\textbf{qanc}, (\textbf{qs1}[j], \textbf{qs2}[j]))\;
	}
	$H(\textbf{qanc})$\;
	$m \leftarrow$ Measure(\textbf{qanc})\;
	\If{$m = 1$}{\Return{FALSE};}
}
\Return{TRUE}\;
\end{algorithm}

\subsection{Optimized Equivalence Checking}
\label{subsec:OptEqCheck}

As discussed in Section~\ref{subsec:EqCheck}, equivalence checking relies on the evaluation of the following parameter

$$ E(\rho_1,\rho_2) = \left|\frac{tr(\rho_1^2)+tr(\rho_2^2)}{2} - tr(\rho_1\rho_2)\right|$$

\noindent and determines whether it equals 0. The values of $tr(\rho_1^2)$, $tr(\rho_2^2)$, and $tr(\rho_1\rho_2)$ can be estimated using the Swap Test. When both $\rho_1$ and $\rho_2$ are pure states (i.e., $tr(\rho_1^2) = tr(\rho_2^2) = 1$), we have $E(\rho_1,\rho_2) = 0$ if and only if $tr(\rho_1\rho_2) = 1$. As discussed in Section~\ref{subsec:optidea}, purity checking for a single state and determining whether $tr(\rho_1\rho_2) = 1$ for two pure states can be accomplished fast using Algorithm~\ref{Alg:TrABequ1}. The idea of the optimization approach involves first checking whether the input states are both pure states. If so, Algorithm~\ref{Alg:TrABequ1} is directly employed to assess whether $tr(\rho_1\rho_2) = 1$, bypassing the need for the general Swap Test. Consider that numerous practical quantum programs execute unitary transform, thereby preserving the purity of output states when given pure input states. Consequently, this optimization strategy may be effective for such programs.

The optimized algorithm is presented in Algorithm~\ref{Alg:OptEqCheck}. Lines $1\sim 2$ and $14\sim 23$ are the same as Algorithm~\ref{Alg:EqCheck}, while lines $3\sim 13$ are newly introduced for optimization. Similar to the discussion in Section~\ref{subsec:EqCheck}, for $i=1,2$, $\mathcal{P}_i \circ G_K$ generates the state of the output of $\mathcal{P}_i$ under the input generated by $G_K$. So lines $3\sim 4$ invoke Algorithm~\ref{Alg:TrABequ1} to verify the purity of the output states from the two target programs, respectively. The additional input parameter $t$ represents the number of rounds in Algorithm~\ref{Alg:TrABequ1}, while the number of rounds in the Swap Test is denoted as $s$. If their purities differ (one state is pure, but the other is not, line 5), it can be concluded that the output states are distinct, and consequently, the target programs are not equivalent (line 6). If both states are pure, Algorithm~\ref{Alg:TrABequ1} is employed to assess their equivalence (lines $9\sim 12$), providing a fast determination. Otherwise, the original equivalence checking algorithm must be executed (lines $14\sim 23$).

\begin{algorithm}
\small
\caption{\texttt{EquivalenceChecking\_optimized}}
\label{Alg:OptEqCheck}
\KwIn{($n$, $k$, $t$, $s$, $\epsilon$, $\mathcal{P}_1$, $\mathcal{P}_2$) \\ \quad $n$: the number of qubits of the target program; \\ \quad $k$: the number of test points $k$; \\  \quad $t$: the number of rounds of trace-1-checking subroutine; \\ \quad $s$: the number of rounds of Swap Test; \\ \quad $\epsilon$: tolerance error; \\ \quad $\mathcal{P}_1, \mathcal{P}_2$: two target programs.}
\KwOut{PASS or FAIL.}

\For{i = 1 to k}{
	$K \leftarrow$ Randomly choose a $n$-element integer array, where each integer is in $0,\dots,5$\;
	$isPure1 \leftarrow$ \texttt{is\_TrAB\_equals\_1}($n$, $t$, $\mathcal{P}_1 \circ G_K$, $\mathcal{P}_1 \circ G_K$)\;
	$isPure2 \leftarrow$ \texttt{is\_TrAB\_equals\_1}($n$, $t$, $\mathcal{P}_2 \circ G_K$, $\mathcal{P}_2 \circ G_K$)\;
	
	\If{$isPure1 \neq isPure2$}{ \Return{FAIL}\; }
	
	\eIf{$isPure1 =$ TRUE}{
		$TrEq \leftarrow$ \texttt{is\_TrAB\_equals\_1}($n$, $t$, $\mathcal{P}_1 \circ G_K$, $\mathcal{P}_2 \circ G_K$)\;
		\If{$TrEq =$ FALSE}{ \Return{FAIL}\; }
	}{ 
		$s_{1} \leftarrow$ \texttt{SwapTest}($n$, $s$, $\mathcal{P}_1 \circ G_K$, $\mathcal{P}_1 \circ G_K$)\;
		$s_{2} \leftarrow$ \texttt{SwapTest}($n$, $s$, $\mathcal{P}_2 \circ G_K$, $\mathcal{P}_2 \circ G_K$)\;
		$s_{12} \leftarrow$ \texttt{SwapTest}($n$, $s$, $\mathcal{P}_1 \circ G_K$, $\mathcal{P}_2 \circ G_K$)\;
		$E\leftarrow (2s_{12} - s_1 - s_2) / s$\;
		\If{$|E| > \epsilon$}{\Return{FAIL}\;}
	}
}
\Return{PASS}\;
\end{algorithm}

\subsection{Optimized Unitarity Checking}
\label{subsec:OptUnCheck}

The original unitarity checking algorithm is based on checking the orthogonal preservation condition, requiring the evaluation of $tr(\rho_1\rho_2)=0$ for two states $\rho_1$ and $\rho_2$. However, $tr(\rho_1\rho_2)=0$ if and only if the probability $p_1$ of obtaining the '1' result equals $1/2$. It implies that we need to check if the '1' result occurs approximately half of the time, unable to return immediately.

The idea of optimization is to focus on another crucial property of unitary transforms: \textit{purity preservation}. A unitary transform must map a pure state into another pure state. If a program maps the input pure state into a mixed state, it can be immediately concluded that the program is not a unitary transform. By employing Algorithm~\ref{Alg:TrABequ1}, we can perform purity checking. However, purity preservation alone is insufficient to guarantee unitarity. For instance, the \texttt{Reset} procedure maps any input state to the \textit{all-zero state} $\left|0 \dots 0\right>$. While this transformation maps a pure state to another pure state (since the all-zero state is pure), it is not a unitary transform. Thus, the original checking approach that relies on \textit{orthogonality preservation} remains essential.

The optimized algorithm, presented in Algorithm~\ref{Alg:OptUnCheck}, enhances the original approach. It includes a purity checking step for the output state of the target program $\mathcal{P}$ when subjected to Pauli inputs (generated by $G_K$ with random $K$) using Algorithm~\ref{Alg:TrABequ1} (lines $2\sim 3$). If the purity checking returns FALSE (line 4), the algorithm immediately fails (line 5). Similar to the equivalence checking, we introduce a new parameter $t$ to represent the number of rounds in Algorithm~\ref{Alg:TrABequ1}, and $s$ denotes the number of rounds in the Swap Test. Upon successful purity checking, the original unitarity checking should be executed (lines $8\sim 9$).

\begin{algorithm}
\small
\caption{\texttt{UnitarityChecking\_optimized}}
\label{Alg:OptUnCheck}
\KwIn{($n$, $k$, $t$, $s$, $\epsilon$, $\mathcal{P}$) \\ \quad $n$: the number of qubits of the target program; \\ \quad $k$: the number of test points; \\ \quad $t$: the number of rounds of trace-1-checking subroutine; \\ \quad $s$: the number of rounds of Swap Test; \\ \quad $\epsilon$: tolerance error; \\ \quad $\mathcal{P}$: the target program.}
\KwOut{PASS or FAIL.}

\For{i = 1 to k}{
	$K \leftarrow$ Randomly choose a $n$-element integer array, where each integer is in $0,\dots,5$\;
	$isTr1 \leftarrow$\texttt{is\_TrAB\_equals\_1}($n$, $t$, $\mathcal{P} \circ G_K$, $\mathcal{P} \circ G_K$)\;
	\If{$isTr1 = $ FALSE}{\Return{FAIL}\;}
}
$r\leftarrow$ \texttt{UnitarityChecking\_original}($n$, $k$, $s$, $\epsilon$, $\mathcal{P}$)\;
\Return{r}
\end{algorithm}

\subsection{Selection of $t$}
\label{subsec:ttrace}

In Algorithms~\ref{Alg:OptEqCheck} and~\ref{Alg:OptUnCheck}, the parameters $k$, $s$, and $\epsilon$ are selected using the same strategy as the original algorithms, which has been discussed in Section~\ref{subsec:parameter}. However, the optimized algorithms introduce an additional parameter $t$, which should have a balanced choice. If $t$ is too small, there is a risk of misjudging a mixed state as a pure state since both results, '0' and '1', are possible for mixed states. This can lead to subsequent misjudgments. On the other hand, if $t$ is too large, it may reduce the efficiency. In Section~\ref{sec:Experments}, we will conduct experimental research on the parameter $t$ to determine an appropriate value.

\section{Experimental Evaluation}
\label{sec:Experments}

In this section, we present the experimental evaluation of our algorithms about equivalence (EQ) checking, identity (ID) checking, and unitarity (UN) checking.

\subsection{Experimental Design}
\label{subsec:expdesign}

Each of our checking algorithms yields a PASS or FAIL result for the given input programs or pairs. For a comprehensive evaluation, it is essential to prepare two categories of targets: \textit{expected-pass} and \textit{expected-fail}. These are used to assess the algorithms' performance on programs that either meet or do not meet the specified relations. An effective algorithm should return PASS for \textit{expected-pass} targets and FAIL for \textit{expected-fail} targets. Therefore, our benchmark programs include both types of targets for each checking task. Existing benchmarks, such as those in~\cite{zhao2021bugs4q}, are designed for general testing and not for the specific tasks discussed in our paper, making them unsuitable for evaluating our relation-focused testing task. To remedy this, we are creating a specialized program benchmark, the details of which will be outlined in Section~\ref{subsec:benchmark}.

As outlined in Sections~\ref{sec:methods} and~\ref{sec:optimization}, our checking algorithms are influenced by several adjustable parameters. A key aspect of our evaluation involves investigating how these parameters affect the algorithms' accuracy. Moreover, we aim to compare the efficacy of the original and optimized versions of our EQ and UN checking algorithms. Additionally, a comprehensive assessment of these algorithms' overall performance, including an exploration of cases where they underperform, is a crucial part of our evaluation. To guide this process, we will focus on the following research questions (RQs):

\begin{itemize}
\item \textbf{RQ1:} How do the parameters $k$ and $\epsilon$ affect the bug detection capability of the original algorithms?

\item \textbf{RQ2:} Is the selection of the $s$ value, as presented in Proposition~\ref{prop:SelectT}, effective for practical testing tasks using the original algorithms?

\item \textbf{RQ3:} What is the performance of the optimized algorithms in terms of efficiency and bug detection accuracy?

\item \textbf{RQ4:} How well do our checking methods perform on the benchmark programs?

\item \textbf{RQ5:} What are the characteristics of the cases on which our algorithms do not perform so well?
\end{itemize}

Our experiments do not include comparisons with existing methods due to two key reasons: Firstly, our approach to EQ and ID checking differs from current methods, which primarily concentrate on quantum information theory~\cite{janzing2005non,Flammia2011FewPauli,montanaro2013survey}  or white-box checking~\cite{viamontes2007checking,yamashita2010fast,burgholzer2020advanced,hong2021approximate,hong2022eqchk_dynamic,wang2022eqchk_seq}, as opposed to our focus on black-box testing. Secondly, UN checking represents a novel problem area for which no existing methods have been established.

To facilitate our experiments, we have developed a dedicated tool for implementing equivalence, identity, and unitarity checking of Q\# programs. This tool enables us to assess the performance and effectiveness of our algorithms effectively. We conducted our experiments using Q\# language (SDK version 0.26.233415) and its simulator. 

The experiments were conducted on a personal computer equipped with an Intel Core i7-1280P CPU and 16 GB RAM. To ensure accuracy and minimize the measurement errors resulting from CPU frequency reduction due to cooling limitations, we executed the testing tasks sequentially, one after another.

\subsection{Benchmark Programs}
\label{subsec:benchmark}

\subsubsection{Original programs}
\label{subsubsec:original}

We select several typical quantum programs as original programs to construct the benchmark, as Table~\ref{table:original-programs} shows.  The "\#Qubits" column indicates the number of used qubits, while the "\#QOps" column provides an approximate count of the single- or two-qubit gates for each program\footnote{Q\# programs are allowed to include loops. The count of gates is determined by (1) fully expanding all loops and (2) decomposing all multi-qubit gates or subroutines into basic single- or two-qubit gates. Due to the possibility of different decomposition methods, \#QOps just represents an approximate count for each program.}, which indicates the program's scale. The "UN" column indicates whether the program implements a unitary transform. Similarly, the "ID" column indicates whether the program is identity.

We have considered the properties of our tasks about EQ, ID, and UN checking in the selection of original programs. First, we select two sets of equivalent circuit pairs as shown in Figure~\ref{fig:TwoCir}: one set is unitary, while the other is not. \texttt{Cir1A} and \texttt{Cir1B} (Figure~\ref{fig:TwoCir}(a)) are from~\cite{jessica2021quanto}, and \texttt{Cir2A} and \texttt{Cir2B} (Fiture~\ref{fig:TwoCir}(b)) are from~\cite{juan2011eqivalent}. Then, we select two typical identity programs: \texttt{Empty} and \texttt{TeleportABA}. Considering that \texttt{P}$\circ$\texttt{invP} is a common identity relation, we select two unitary programs \texttt{QFT}, \texttt{QPE} and their inverse programs \texttt{invQFT}, \texttt{invQPE}. We also select a complicated program \texttt{CRot} and a common non-unitary program \texttt{Reset}.

\begin{table*}
\centering
\caption{Original programs to construct the benchmark}
\label{table:original-programs}
\footnotesize

\begin{tabular}{c|p{10cm}|c|c|c|c}
	\toprule
	\textbf{Program} & \makecell[c]{\textbf{Description}} & \textbf{\#Qubits} & \textbf{\#QOps} & \textbf{UN} & \textbf{ID} \\
	\bottomrule
	\texttt{Cir1A} & The implementation the circuit in Figure~\ref{fig:TwoCir}(a) left. & 2 & 6 & \checkmark &\\
	\hline
	\texttt{Cir1B} & The implementation of the circuit in Figure~\ref{fig:TwoCir}(a) right. It is equivalent to \texttt{Cir1A}. & 2 & 1 & \checkmark &\\
	\hline
	\texttt{Cir2A} & The implementation of the circuit in Figure~\ref{fig:TwoCir}(b) left. & 3 & 4 &&\\
	\hline
	\texttt{Cir2B} & The implementation the circuit in Figure~\ref{fig:TwoCir}(b) right. It is equivalent to \texttt{Cir2A}. & 3 & 4 &&\\
	\hline
	\texttt{Empty} & A program with no statement; thus, it is identity. & 6 & 0 & \checkmark & \checkmark \\
	\hline
	\multirow{3}{*}{\texttt{TeleportABA}} & A program using quantum teleportation to teleport quantum state from input variable \verb+A+ to intermediate variable \verb+B+, and then teleport \verb+B+ back to \verb+A+. So it is the identity on input variable \verb+A+. & \multirow{3}{*}{3} & \multirow{3}{*}{60} & \multirow{3}{*}{\checkmark} & \multirow{3}{*}{\checkmark} \\
	\hline
	\texttt{QFT} & An implementation of the Quantum Fourier Transform (QFT) algorithm. & 5 & 17 & \checkmark & \\
	\hline
	\texttt{invQFT} &  The inverse program of \texttt{QFT}. & 5 & 17 & \checkmark & \\
	\hline
	\multirow{3}{*}{\texttt{QPE}} & An implementation of the Quantum Phase Estimation algorithm. It has an input parameter of estimated unitary transform $U$. In our benchmark, we fix this parameter as $U=SH$. & \multirow{3}{*}{5} & \multirow{3}{*}{22} & \multirow{3}{*}{\checkmark} &\\
	\hline
	\texttt{invQPE} & The inverse program of \texttt{QPE}. & 5 & 22 & \checkmark &\\
	\hline
	\multirow{4}{*}{\texttt{CRot}} & The key subroutine in \textit{Harrow-Hassidim-Lloyd} (HHL) algorithm~\cite{harrow2009quantum}. It implements the unitary transform: $\left|\lambda\right>\left|0\right> \rightarrow \left|\lambda\right>(\sqrt{1-\frac{c^2}{\lambda^2}}\left|0\right> + \frac{c}{\lambda}\left|1\right>)$, where $c$ represents a parameter. In our benchmark, we fix $c=1$. & \multirow{4}{*}{5} & \multirow{4}{*}{680} & \multirow{4}{*}{\checkmark} & \\
	\hline
	\multirow{2}{*}{\texttt{Reset}} & A subroutine that resets a qubit register to the all-zero state $\left|0\dots 0\right>$. Note that it is not a unitarity transform. & \multirow{2}{*}{6} & \multirow{2}{*}{12} & & \\
	\bottomrule
\end{tabular}
\end{table*}

\subsubsection{Construct the benchmark}
\label{subsubsec:benchmark}

We begin by constructing the benchmark based on the original programs, as discussed in Section~\ref{subsec:expdesign}. Our benchmark must include both expected-pass and expected-fail targets for EQ checking. Expected-pass pairs consist of an original program and a carefully selected equivalent program, while expected-fail pairs consist of an original program and a mutated program obtained by applying a mutant operator into the original program. In our evaluation, we consider two types of mutant operators: gate mutant (GM) operators, which involve the addition, removal, or modification of gates, and measurement mutant (MM) operators, which encompass the addition or removal of measurement operations~\cite{fortunato2022mutation,long2022process}. The concrete benchmark is outlined in Table~\ref{tab:benchmark}. Programs labeled with the prefix "\textit{error}" are mutant versions of the original programs generated using either GM or MM. "$5\times$(error \texttt{Cir1A}, \texttt{Cir1B})" signifies that we have prepared five pairs, where each pair consists of one mutant \texttt{Cir1A} and the original \texttt{Cir1B}, and different pair features a different mutant operator.

For ID checking evaluation, the expected-pass programs are constructed based on known identity relations, such as \texttt{P$\circ$invP}, to assess the effectiveness of testing inverse programs (see Scenario~\ref{scn:inverse} in Section~\ref{sec:questions}). Notably, both No.4 and No.11 pertain to the relation \texttt{QFT$\circ$invQFT}. No.4 utilizes EQ checking with \texttt{Empty}, while No.11 employs ID checking. These cases are used to compare ID checking and EQ checking with identity. The expected-fail programs are generated by applying GM or MM mutant operator to each expected-pass program.

For UN checking evaluation, the expected-pass programs comprise those listed in Table~\ref{table:original-programs} that do not contain any measurement operations. Since GM mutations do not affect unitarity while MM mutations do, some expected-pass programs are constructed by applying GM mutant operators, while all expected-fail programs are created by applying MM mutant operators.

As illustrated in Table~\ref{tab:benchmark}, our benchmark comprises 63 Q\# programs or program pairs used to evaluate the effectiveness of our methods.

\begin{figure}
\centering
\subfigure[\texttt{Cir1}]{\includegraphics[scale=0.5]{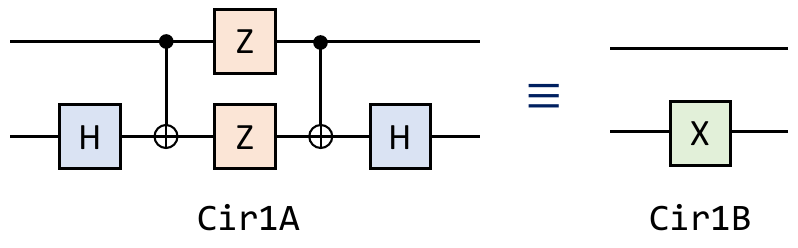}}
\quad
\subfigure[\texttt{Cir2}]{\includegraphics[scale=0.5]{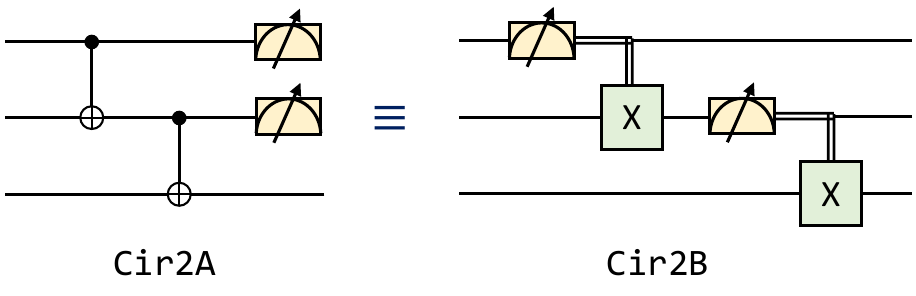}}
\caption{Two sets of equivalent circuits}
\label{fig:TwoCir}
\end{figure}

\begin{table}
\centering
\caption{Benchmark programs}
\label{tab:benchmark}
\footnotesize
\begin{tabular}{c|c|cc|c}
	\toprule
	\textbf{Task} & \textbf{Expected} & \textbf{No.} & \textbf{Programs / Pairs} & \textbf{\#} \\
	\midrule
	
	\multirow{10}{*}{EQ} & \multirow{4}{*}{PASS} & 1 & (\texttt{Cir1A}, \texttt{Cir1B}) & \multirow{4}{*}{4} \\
	&& 2 & (\texttt{Cir2A}, \texttt{Cir2B}) & \\
	&& 3 & (\texttt{QFT}, \texttt{QFT}) & \\
	&& 4 & (\texttt{QFT$\circ$invQFT}, \texttt{Empty}) & \\
	
	\cmidrule{2-5}
	
	& \multirow{5}{*}{FAIL} & 5 & $5\times$(error \texttt{Cir1A}, \texttt{Cir1B}) & \multirow{5}{*}{20}  \\
	&& 6 & $2\times$(error \texttt{Cir2A}, \texttt{Cir2B}) & \\
	&& 7 & $3\times$(\texttt{Cir2A}, error \texttt{Cir2B}) & \\
	&& 8 & $5\times$(error \texttt{QFT}, \texttt{QFT}) & \\
	&& 9 & $5\times$(\texttt{QFT}$\circ$error \texttt{invQFT}, \texttt{Empty}) & \\
	
	\midrule
	
	\multirow{9}{*}{ID} & \multirow{4}{*}{PASS} & 10 & \texttt{Empty} & \multirow{4}{*}{4} \\
	&& 11 & \texttt{QFT$\circ$invQFT} & \\
	&& 12 & \texttt{QPE$\circ$invQPE} & \\
	&& 13 & \texttt{TeleportABA} & \\
	
	\cmidrule{2-5}
	
	& \multirow{4}{*}{FAIL} & 14 & $2\times$ error \texttt{Empty} & \multirow{4}{*}{17} \\
	&& 15 & $5\times$ \texttt{QFT}$\circ$error 
	\texttt{invQFT} & \\
	&& 16 & $5\times$ \texttt{QPE}$\circ$error \texttt{invQPE} & \\
	&& 17 & $5\times$ error \texttt{TeleportABA} & \\
	
	\midrule
	
	\multirow{11}{*}{UN} & \multirow{6}{*}{PASS} & 18 & \texttt{Cir1A} & \multirow{6}{*}{8} \\
	&& 19 & \texttt{Cir1B} & \\
	&& 20 & \texttt{Empty} & \\
	&& 21 & \texttt{QFT} & \\
	&& 22 & \texttt{CRot} & \\
	&& 23 & $3\times$ GMs of \texttt{QFT} & \\
	
	\cmidrule{2-5}
	
	& \multirow{5}{*}{FAIL} & 24 & \texttt{Cir2A} & \multirow{5}{*}{10} \\
	&& 25 & \texttt{Cir2B} & \\
	&& 26 & \texttt{Reset} & \\
	&& 27 & $5\times$ MMs of \texttt{QFT} & \\
	&& 28 & $2\times$ MMs of \texttt{CRot} & \\
	
	\bottomrule
\end{tabular}
\end{table}


\subsection{Experiment Configurations}
\label{subsec:ExConfig}

We describe our experimental configurations for each research question as follows\footnote{ The source code of our algorithms and evaluation are available at \href{https://github.com/MgcosA/EvaluationCodeOfQuantumRelationChecking/}{https://github.com/MgcosA/EvaluationCodeOfQuantumRelationChecking} }.

\vspace*{1mm}
\noindent
$\bullet$ \textbf{RQ1:} 
We ran our algorithms on each expected-fail program, adjusting the values of $k$ and $\epsilon$. The selection of these ranges and values was determined by proper preliminary experiments.
We examine $\epsilon$ values of 0.05, 0.10, 0.15, and 0.20 for EQ and UN. For EQ, we test $k$ values of 1, 2, 3, 4, 6, and 10. Since the UN algorithm requires two types of inputs, the minimum $k$ value we consider is 2. Thus, we explore $k$ values of 2, 3, 4, 6, and 10 for UN. The corresponding $s$ values are determined as shown in Table~\ref{tab:T}. For ID, the algorithm does not involve the Swap Test, allowing for larger $k$ values. Therefore, we select $k$ values of 1, 2, 3, 4, 6, 10, 15, 20, 30, and 50. We repeat every testing process for each program 100 times and record the number of outputs returning PASS. A smaller number indicates higher effectiveness in detecting bugs. Throughout the experiment, we aim to identify a set of $k$ and $\epsilon$ values that balance between accuracy and running time.

\vspace*{1mm}
\noindent
$\bullet$ \textbf{RQ2:} 
We fix $k$ and $\epsilon$ in the EQ and UN algorithms and select different values of $s$ to run the algorithms for each expected-pass program. We calculate the value of $s_0$ using Proposition~\ref{prop:SelectT}, and then choose $s = 0.05s_0$, $0.1s_0$, $0.2s_0$, $0.3s_0$, $0.4s_0$, $0.5s_0$, $0.7s_0$, $0.9s_0$, and $1.0s_0$, respectively. We repeat the testing process for each program with different choices of $s$ 100 times respectively, and record the number of outputs that return PASS. A higher number indicates fewer misjudgments by the algorithm.

\vspace*{1mm}
\noindent
$\bullet$ {\textbf{RQ3:} We conducted experiments to evaluate the performance of the optimized algorithms - Algorithm~\ref{Alg:OptEqCheck} and Algorithm~\ref{Alg:OptUnCheck}. We keep the parameters $k$, $\epsilon$, and $s$ fixed and run Algorithms~\ref{Alg:OptEqCheck} and~\ref{Alg:OptUnCheck} on both the expected-pass and expected-fail programs. The values of $k$ and $\epsilon$ are selected based on the balanced configuration determined in the experiments for RQ1, while the value of $t$ is determined using Proposition~\ref{prop:SelectT}. We explored the impact of different values of $t$ on the algorithm's performance.
We selected $t$ values as $t$ = 1, 2, 3, 4, 6, 10, 15, 20, 30, and 50, respectively. For each benchmark program and different choice of $t$, we repeated the testing process 100 times, recording the number of outputs that returned a "PASS" result. Additionally, we kept track of the number of times the optimization rules were triggered during the testing. The trigger condition was defined as Algorithm~\ref{Alg:OptEqCheck} returning at line 6 or 11, and Algorithm~\ref{Alg:OptUnCheck} returning at line 5.
By comparing the number of "PASS" results obtained from the optimized algorithms with those from the original algorithms and analyzing the number of triggers, we can gain insights into the performance of the optimized algorithms. This evaluation allows us to understand how effectively the optimization rules improve the algorithms' performance.

\vspace*{1mm}
\noindent
$\bullet$ \textbf{RQ4:}
We executed both the original and optimized algorithms for each benchmark program listed in Table~\ref{tab:benchmark}. To ensure a balanced selection of parameters $k$, $\epsilon$, and $t$, we leveraged the findings from the experiments conducted in RQ1 and RQ3. Additionally, we calculated the value of $s$ using Proposition~\ref{prop:SelectT}. Each program was tested 100 times, and we recorded the number of outputs that returned PASS. If an expected-pass program yielded a number close to 100 and an expected-fail program yielded a number close to 0, we can conclude that the algorithm is effective for them. Moreover, we measured the running time of both the original algorithms ($T_0$) and the optimized algorithms ($T_{opt}$) for each program. The ratio $T_{opt}/T_0$ serves as a measure of the efficiency of the optimization rules, where a smaller ratio indicates higher efficiency.

\vspace*{1mm}
\noindent
$\bullet$ \textbf{RQ5:} We identify specific cases from RQ1 to RQ4 where our algorithms underperform. Our subsequent analysis aims to extract insights regarding the bug patterns within these particular instances.


\subsection{Result Analysis}
\label{subsec:results}

\subsubsection{RQ1: About parameter $k$ and $\epsilon$}
\label{subsubsec:RQ1}

The testing results for benchmark programs with different parameters $k$ and $\epsilon$ for EQ and UN are presented in Table~\ref{tab:EQkepsilon} and~\ref{tab:UNkepsilon}, while the results with different $k$ for ID are shown in Table~\ref{tab:IDkepsilon}. In Table~\ref{tab:EQkepsilon}, "Test No. 5.1" means the 1st program/pair of No. 5 in Table~\ref{tab:benchmark}. Similarly, the same notation is used in Table~\ref{tab:IDkepsilon} and ~\ref{tab:UNkepsilon}.
Most of the results regarding the number of PASS are 0, indicating that our algorithms effectively detect failures in the expected-fail programs/pairs corresponding to the selected parameters. Notably, even a single gate or measurement mutation can lead to a significant deviation from the original program. While some programs or pairs have non-zero results, their trend suggests that the number of PASS decreases as $k$ increases and $\epsilon$ decreases. This indicates that these programs or pairs are closer to the correct version, and the selected parameters are too lenient to identify their bugs.

According to Corollary~\ref{cor:t}, the time complexity of EQ and UN is more sensitive to $\epsilon$ (which depends on $\frac{1}{\epsilon^2}$) than to $k$ (which depends on $k\log k$). Therefore, it is advisable to avoid choosing very small values for $\epsilon$, but $k$ can be slightly larger. Interestingly, for some programs such as No. 6.2 and No. 27.3, even adopting excessively small $\epsilon$, the number of unrevealed bugs cannot be reduced to a low level. Based on the findings in Tables~\ref{tab:EQkepsilon}, \ref{tab:IDkepsilon} and~\ref{tab:UNkepsilon}, we proceed with $k=4$ and $\epsilon=0.15$ for EQ and UN, and $k=50$ for ID in the subsequent experiments.

\begin{table*}
\centering
\caption{The number of PASS with different $k$ and $\epsilon$ for expected-fail programs in EQ.}
\label{tab:EQkepsilon}
\resizebox*{\linewidth}{!}{
\begin{tabular}{c|c|cccccc|cccccc|cccccc|cccccc}
	\toprule
	\multirow{4}{*}{\textbf{\textbf{Task}}} && \multicolumn{6}{c|}{$\epsilon=0.05$} & \multicolumn{6}{c|}{$\epsilon=0.10$} & \multicolumn{6}{c|}{$\epsilon=0.15$} & \multicolumn{6}{c}{$\epsilon=0.20$}\\
	\cline{2-26}
	& \diagbox{ \makecell{\textbf{Test} \\ \textbf{No.}} }{$k$} & 1 & 2 & 3 & 4 & 6 & 10 & 1 & 2 & 3 & 4 & 6 & 10 & 1 & 2 & 3 & 4 & 6 & 10 & 1 & 2 & 3 & 4 & 6 & 10\\
	\midrule
	
	 \multirow{22}{*}{EQ} & 5.1 & 35 & 10 & 4 & 2 & 0 & 0 & 31 & 10 & 4 & 1 & 0 & 0 & 30 & 9 & 5 & 2 & 0 & 0 & 44 & 18 & 3 & 2 & 0 & 0\\
	 & 5.2 & 0 & 0 & 0 & 0 & 0 & 0 & 0 & 0 & 0 & 0 & 0 & 0 & 0 & 0 & 0 & 0 & 0 & 0 & 0 & 0 & 0 & 0 & 0 & 0\\
	 & 5.3 & 16 & 4 & 0 & 0 & 0 & 0 & 12 & 7 & 1 & 1 & 0 & 0 & 18 & 7 & 1 & 1 & 0 & 0 & 28 & 5 & 1 & 0 & 0 & 0\\
	 & 5.4 & 23 & 5 & 2 & 2 & 0 & 0 & 19 & 6 & 0 & 0 & 0 & 0 & 17 & 5 & 0 & 0 & 0 & 0 & 23 & 2 & 1 & 0 & 0 & 0\\
	 & 5.5 & 33 & 12 & 6 & 0 & 0 & 0 & 21 & 14 & 4 & 2 & 0 & 0 & 38 & 15 & 3 & 1 & 0 & 0 & 38 & 9 & 3 & 1 & 1 & 0\\
	 \cmidrule{2-26}
	 & 6.1 & 10 & 1 & 1 & 0 & 0 & 0 & 9 & 4 & 0 & 0 & 0 & 0 & 17 & 1 & 0 & 0 & 0 & 0 & 17 & 4 & 0 & 0 & 0 & 0\\
	 & 6.2 & 58 & 30 & 25 & 11 & 4 & 2 & 53 & 30 & 17 & 14 & 4 & 1 & 81 & 65 & 46 & 31 & 16 & 12 & 78 & 66 & 54 & 58 & 36 & 21\\
	 \cmidrule{2-26}
	 & 7.1 & 53 & 40 & 12 & 7 & 3 & 0 & 63 & 34 & 24 & 12 & 5 & 0 & 75 & 58 & 48 & 35 & 21 & 10 & 88 & 75 & 59 & 66 & 46 & 22\\
	 & 7.2 & 33 & 11 & 6 & 0 & 0 & 0 & 40 & 12 & 9 & 1 & 0 & 0 & 55 & 34 & 12 & 6 & 0 & 1 & 62 & 47 & 24 & 23 & 7 & 1\\
	 & 7.3 & 25 & 2 & 0 & 0 & 0 & 0 & 25 & 5 & 0 & 0 & 0 & 0 & 20 & 4 & 1 & 0 & 0 & 0 & 20 & 8 & 2 & 1 & 0 & 0\\
	 \cmidrule{2-26}
	 & 8.1 & 0 & 0 & 0 & 0 & 0 & 0 & 0 & 0 & 0 & 0 & 0 & 0 & 0 & 0 & 0 & 0 & 0 & 0 & 0 & 0 & 0 & 0 & 0 & 0\\
	 & 8.2 & 21 & 14 & 4 & 0 & 0 & 0 & 26 & 15 & 6 & 0 & 0 & 0 & 30 & 12 & 2 & 1 & 0 & 0 & 44 & 12 & 7 & 2 & 0 & 0\\
	 & 8.3 & 0 & 0 & 0 & 0 & 0 & 0 & 1 & 0 & 0 & 0 & 0 & 0 & 8 & 1 & 0 & 0 & 0 & 0 & 31 & 10 & 2 & 0 & 0 & 0\\
	 & 8.4 & 1 & 0 & 0 & 0 & 0 & 0 & 1 & 0 & 0 & 0 & 0 & 0 & 0 & 0 & 0 & 0 & 0 & 0 & 1 & 0 & 0 & 0 & 0 & 0\\
	 & 8.5 & 2 & 0 & 0 & 0 & 0 & 0 & 1 & 0 & 0 & 0 & 0 & 0 & 1 & 0 & 0 & 0 & 0 & 0 & 3 & 0 & 0 & 0 & 0 & 0\\
	 \cmidrule{2-26}
	 & 9.1 & 4 & 0 & 0 & 0 & 0 & 0 & 1 & 0 & 0 & 0 & 0 & 0 & 2 & 0 & 0 & 0 & 0 & 0 & 2 & 0 & 0 & 0 & 0 & 0 \\
	 & 9.2 & 2 & 0 & 0 & 0 & 0 & 0 & 4 & 0 & 0 & 0 & 0 & 0 & 3 & 0 & 0 & 0 & 0 & 0 & 3 & 0 & 0 & 0 & 0 & 0 \\
	 & 9.3 & 0 & 0 & 0 & 0 & 0 & 0 & 0 & 0 & 0 & 0 & 0 & 0 & 0 & 0 & 0 & 0 & 0 & 0 & 0 & 0 & 0 & 0 & 0 & 0 \\
	 & 9.4 & 0 & 0 & 0 & 0 & 0 & 0 & 0 & 0 & 0 & 0 & 0 & 0 & 0 & 0 & 0 & 0 & 0 & 0 & 0 & 0 & 0 & 0 & 0 & 0\\
	 & 9.5 & 0 & 0 & 0 & 0 & 0 & 0 & 0 & 0 & 0 & 0 & 0 & 0 & 0 & 0 & 0 & 0 & 0 & 0 & 2 & 0 & 0 & 0 & 0 & 0\\
	
	\bottomrule
\end{tabular}
}
\end{table*}

\begin{table}
\centering
\caption{The number of PASS with different $k$ for expected-fail programs in ID.}
\label{tab:IDkepsilon}
\resizebox{\linewidth}{!}{
\begin{tabular}{c|c|cccccccccc}
	\toprule
	\textbf{Task} & \diagbox{\makecell{\textbf{Test} \\ \textbf{No.}}}{$k$} & 1 & 2 & 3 & 4 & 6 & 10 & 15 & 20 & 30 & 50\\
	\midrule
	\multirow{18}{*}{ID} &14.1 & 67 & 46 & 33 & 22 & 5 & 0 & 0 & 0 & 0 & 0\\
	&14.2 & 96 & 87 & 76 & 64 & 55 & 28 & 18 & 17 & 6 & 0\\
	\cmidrule{2-12}
	&15.1 & 26 & 3 & 0 & 0 & 0 & 0 & 0 & 0 & 0 & 0\\
	&15.2 & 27 & 5 & 1 & 0 & 0 & 0 & 0 & 0 & 0 & 0\\
	&15.3 & 1 & 0 & 0 & 0 & 0 & 0 & 0 & 0 & 0 & 0\\
	&15.4 & 3 & 0 & 0 & 0 & 0 & 0 & 0 & 0 & 0 & 0\\
	&15.5 & 28 & 5 & 1 & 0 & 0 & 0 & 0 & 0 & 0 & 0\\
	\cmidrule{2-12}
	&16.1 & 13 & 1 & 1 & 0 & 0 & 0 & 0 & 0 & 0 & 0\\
	&16.2 & 22 & 7 & 1 & 0 & 0 & 0 & 0 & 0 & 0 & 0\\
	&16.3 & 44 & 11 & 4 & 3 & 0 & 0 & 0 & 0 & 0 & 0\\
	&16.4 & 49 & 21 & 8 & 5 & 0 & 0 & 0 & 0 & 0 & 0\\
	&16.5 & 4 & 0 & 0 & 0 & 0 & 0 & 0 & 0 & 0 & 0\\
	\cmidrule{2-12}
	&17.1 & 2 & 0 & 0 & 0 & 0 & 0 & 0 & 0 & 0 & 0\\
	&17.2 & 17 & 1 & 0 & 0 & 0 & 0 & 0 & 0 & 0 & 0\\
	&17.3 & 18 & 3 & 1 & 0 & 0 & 0 & 0 & 0 & 0 & 0\\
	&17.4 & 10 & 3 & 0 & 0 & 0 & 0 & 0 & 0 & 0 & 0\\
	&17.5 & 14 & 0 & 0 & 0 & 0 & 0 & 0 & 0 & 0 & 0\\
	\bottomrule
\end{tabular}
}
\end{table}

\begin{table*}
\centering
\caption{The number of PASS with different $k$ and $\epsilon$ for expected-fail programs in UN.}
\label{tab:UNkepsilon}
\resizebox*{\linewidth}{!}{
\begin{tabular}{c|c|cccccc|cccccc|cccccc|cccccc}
	\toprule
	\multirow{4}{*}{\textbf{\textbf{Task}}} && \multicolumn{6}{c|}{$\epsilon=0.05$} & \multicolumn{6}{c|}{$\epsilon=0.10$} & \multicolumn{6}{c|}{$\epsilon=0.15$} & \multicolumn{6}{c}{$\epsilon=0.20$}\\
	\cline{2-26}
	& \diagbox{ \makecell{\textbf{Test} \\ \textbf{No.}} }{$k$} & 1 & 2 & 3 & 4 & 6 & 10 & 1 & 2 & 3 & 4 & 6 & 10 & 1 & 2 & 3 & 4 & 6 & 10 & 1 & 2 & 3 & 4 & 6 & 10\\
	\midrule
		
	 \multirow{12}{*}{UN} & 24 & - & 0 & 0 & 0 & 0 & 0 & - & 0 & 0 & 0 & 0 & 0 & - & 0 & 0 & 0 & 0 & 0 & - & 0 & 0 & 0 & 0 & 0\\
	 \cmidrule{2-26}
	 & 25 & - & 0 & 0 & 0 & 0 & 0 & - & 0 & 0 & 0 & 0 & 0 & - & 0 & 0 & 0 & 0 & 0 & - & 0 & 0 & 0 & 0 & 0\\
	\cmidrule{2-26}
	 & 26 & - & 0 & 0 & 0 & 0 & 0 & - & 0 & 0 & 0 & 0 & 0 & - & 0 & 0 & 0 & 0 & 0 & - & 0 & 0 & 0 & 0 & 0 \\
	 \cmidrule{2-26}
	 & 27.1 & - & 0 & 0 & 0 & 0 & 0 & - & 0 & 0 & 0 & 0 & 0 & - & 0 & 0 & 0 & 0 & 0 & - & 0 & 0 & 0 & 0 & 0\\
	 & 27.2 & - & 0 & 0 & 0 & 0 & 0 & - & 0 & 0 & 0 & 0 & 0 & - & 0 & 0 & 0 & 0 & 0 & - & 0 & 0 & 0 & 0 & 0\\
	 & 27.3 & - & 77 & 61 & 61 & 43 & 26 & - & 78 & 66 & 54 & 50 & 28 & - & 81 & 64 & 67 & 48 & 30 & - & 87 & 83 & 81 & 74 & 53 \\
	 & 27.4 & - & 0 & 0 & 0 & 0 & 0 & - & 0 & 0 & 0 & 0 & 0 & - & 0 & 0 & 0 & 0 & 0 & - & 0 & 0 & 0 & 0 & 0\\
	 & 27.5 & - & 0 & 0 & 0 & 0 & 0 & - & 0 & 0 & 0 & 0 & 0 & - & 0 & 0 & 0 & 0 & 0 & - & 0 & 0 & 0 & 0 & 0\\
	 \cmidrule{2-26}
	 & 28.1 & - & 6 & 2 & 1 & 0 & 0 & - & 6 & 2 & 1 & 0 & 0 & - & 5 & 1 & 0 & 0 & 0 & - & 8 & 3 & 2 & 1 & 0\\
	 & 28.2 & - & 4 & 0 & 0 & 1 & 0 & - & 7 & 0 & 0 & 0 & 0 & - & 11 & 1 & 0 & 0 & 0 & - & 5 & 4 & 0 & 0 & 0 \\
	
	\bottomrule
\end{tabular}
}
\end{table*}

\subsubsection{RQ2: About parameter $s$}
\label{subsubsec:RQ2}
The testing results for benchmark programs with different parameters $s$ for EQ and UN are presented in Figure~\ref{fig:EqUnT}. It is evident that selecting $s=s_0$ leads to nearly all expected-pass programs returning PASS. This outcome demonstrates the effectiveness of the chosen parameter $s$ based on Proposition~\ref{prop:SelectT} in controlling type II errors. Furthermore, it is notable that the bad results (i.e., the rate of PASS is less than $1-\alpha_2 = 0.9$) is primarily observed in certain programs when $s<0.5s_0$. Additionally, for Test No. 1, 3, and 4, the number of PASS remains at 100 even when $s=0.05s_0$. This observation indicates that different programs exhibit varying degrees of sensitivity to the parameter $s$, and the selection of $s$ in Proposition~\ref{prop:SelectT} is conservative to ensure a high PASS rate in the worst-case scenario.

\begin{figure*}
\centering
\caption{The number of PASS with different $s$ for expected-pass programs in EQ and UN under $k=4, \epsilon=0.15$}
\label{fig:EqUnT}
\includegraphics[scale=0.6]{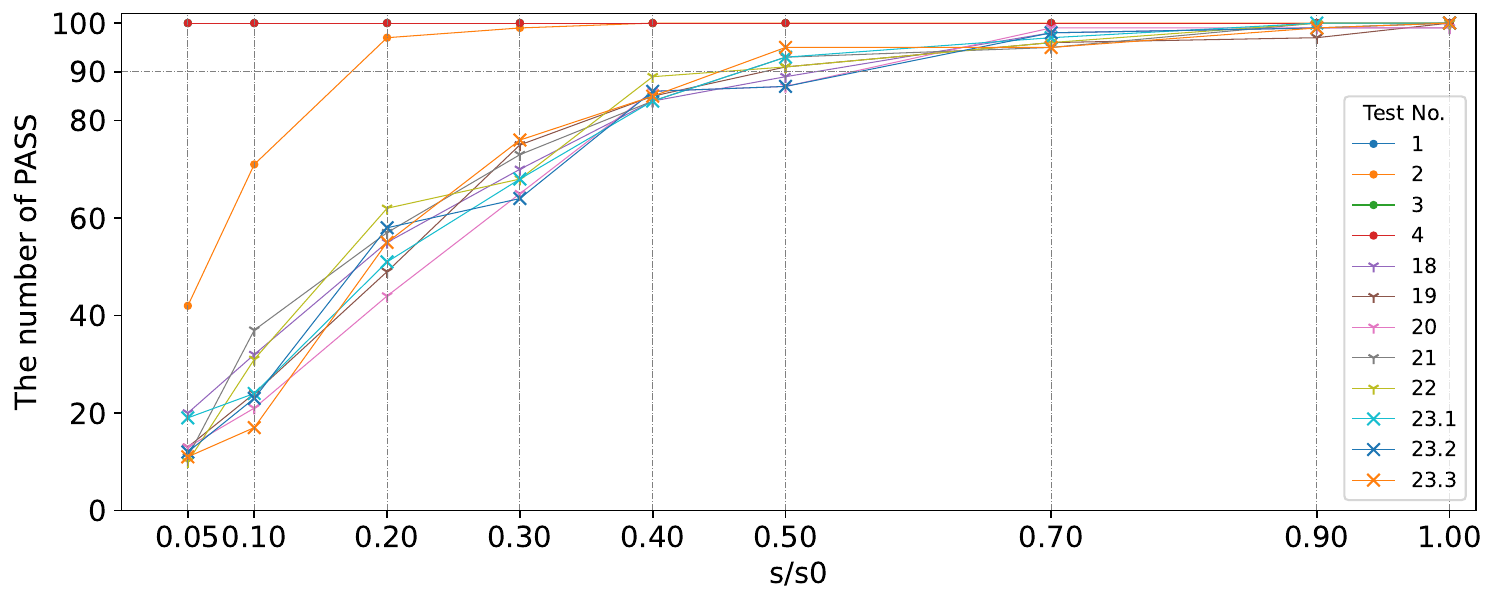}
\end{figure*}

\subsubsection{RQ3: The performance of optimized algorithms}
\label{subsubsec:RQ3}

The testing results with different $t$ for the optimized algorithms are presented in Table~\ref{tab:TTrace}. It can be observed that most of the expected-fail programs are killed by the optimization rules, indicating the effectiveness of our optimization rules in accelerating bug detection. In addition, most expected-pass programs are not triggered by the optimization rules. It is because our optimization rules only relate to returning FAIL immediately.

Several interesting cases, namely No. 2, No. 6.2, No. 7.1, No. 26, and No. 27.3, deserve attention. For No. 2, it is an expected-pass case but triggered by optimization rules, which means a misjudgment. As $t$ increases, the number of PASS instances increases while the number of triggers decreases. It is noteworthy that \texttt{Cir2A} and \texttt{Cir2B} contain measurements (Figure~\ref{fig:TwoCir}(b)), resulting in expected mixed-state outputs. Therefore, a small value of $t$ may lead to the misjudgment of pure and mixed states, increasing the likelihood of misjudgment of results in this particular test scenario. In the case of No. 6.2 and No. 7.1, it appears that a small $t$ effectively excludes incorrect results. However, it is important to observe that as $t$ increases, the number of triggers decreases. This implies that returning FAIL based on optimization rules is also a misjudgment (it is similar to No. 2). No. 26 (program \texttt{Reset}) is an intriguing case as it represents an expected-fail scenario that cannot be triggered by the optimization rule of UN. This is a typical example that shows that purity preservation alone is insufficient for unitarity checking. Lastly, for No. 27.3, a significant observation is that the number of PASS instances in the optimized algorithm is considerably smaller than in the original algorithm (column 'Base'). It should be noted that the original algorithm is based on orthogonal preservation, while the optimization rule is based on purity preservation. This suggests that purity preservation checking is more effective than orthogonal preservation checking in this particular task.

Based on the results presented in Table~\ref{tab:TTrace}, we find that $t=20$ provides a balanced choice, which we will further analyze in the subsequent overall evaluation.

\subsubsection{RQ4: Overall performance for benchmark programs}
\label{subsubsec:RQ4}
The running results are shown in Table~\ref{tab:results}, including both the original and optimized algorithms. We can see that our methods work well for most programs, i.e., the percentage of PASS is near 100 for expected-pass programs and near 0 for expected-fail programs. It means that our methods are effective for most benchmark programs with the parameters setting in the "Running Parameters" column of Table~\ref{tab:results}. The "Average Run Time" column gives the average running time in a single test.

The average running time of expected-fail programs is less than that of corresponding expected-pass programs. That is because the FAIL result is returned as long as one test point fails. Conversely, the PASS result is returned only if all test points pass. ID is faster than EQ and UN because ID does not have the repeat of the Swap Test in EQ and UN. For example, consider programs No. 4 and No. 11, which both check the relation \texttt{QFT$\circ$invQFT} $=I$. No. 4 uses EQ and another identity program, while No. 11 uses ID directly to check it. No. 11 requires only about one-thousandth (for the original EQ algorithm) or one-eighth (for optimized EQ algorithm) running time of No. 4. So identity checking can be implemented efficiently. It also tells us that finding more quantum metamorphic relations which can be reduced to identity checking is valuable.

Take note of the column labeled "$T_{opt}/T_o$," which indicates the efficiency of the optimization rules. We observe that for the equivalence (EQ) checking, the optimized algorithm demonstrates varying degrees of acceleration with the same parameter configuration. The acceleration ranges from dozens of percent (e.g., No. 2, No. 6.2, No. 7.1, No. 7.2, No. 7.3) to several hundred multiples (remaining EQ cases). These results indicate the effectiveness of the optimization rules for EQ. In the case of unitarity (UN) checking, the optimized algorithm is slightly slower than the original algorithm for all expected-pass cases and No.26. This is due to the additional purity preservation checking in the optimized algorithm, and they cannot be excluded by purity preservation checking.
Consequently, the orthogonal checking cannot be bypassed. Nevertheless, the optimized algorithm significantly accelerates several hundred multiples for other UN cases. Notably, the optimization may also enhance the accuracy of bug detection in some cases, such as No.27.3.

\begin{table*}
\centering
\caption{The number of PASS and triggers by optimization rules with different $t$ for programs in EQ and UN under $k=4$, $\epsilon=0.15$ and $s$ selected by Proposition~\ref{prop:SelectT}}
\label{tab:TTrace}
\resizebox*{\linewidth}{!}{
\begin{tabular}{c|c|c|c|cccccccccc|cccccccccc}
\toprule
&&& \multicolumn{11}{c|}{\textbf{Number of PASS}} & \multicolumn{10}{c}{ \textbf{Number of triggers by optimization rules} } \\
\cline{3-24}
\textbf{Task} & \textbf{Expected} & \diagbox{ \makecell{\textbf{Test} \\ \textbf{No.}} }{$t$} & Base & 1 & 2 & 3 & 4 & 6 & 10 & 15 & 20 & 30 & 50 & 1 & 2 & 3 & 4 & 6 & 10 & 15 & 20 & 30 & 50 \\
\midrule

	\multirow{25}{*}{EQ} & \multirow{4}{*}{PASS} & 1 & 100 & 100 & 100 & 100 & 100 & 100 & 100 & 100 & 100 & 100 & 100 & 0 & 0 & 0 & 0 & 0 & 0 & 0 & 0 & 0 & 0 \\
	&& 2 & 100 & 4 & 3 & 8 & 19 & 38 & 84 & 94 & 99 & 100 & 100 & 96 & 97 & 92 & 81 & 62 & 16 & 6 & 1 & 0 & 0 \\
	&& 3 & 100 & 100 & 100 & 100 & 100 & 100 & 100 & 100 & 100 & 100 & 100 & 0 & 0 & 0 & 0 & 0 & 0 & 0 & 0 & 0 & 0 \\
	&& 4 & 100 & 100 & 100 & 100 & 100 & 100 & 100 & 100 & 100 & 100 & 100 & 0 & 0 & 0 & 0 & 0 & 0 & 0 & 0 & 0 & 0 \\
	
	\cmidrule{2-24}
	
	& \multirow{20}{*}{FAIL} & 5.1 & 2 & 23 & 4 & 3 & 1 & 2 & 0 & 2 & 3 & 1 & 1 & 77 & 96 & 97 & 99 & 98 & 100 & 98 & 97 & 99 & 99 \\
	&& 5.2 & 0 & 17 & 7 & 1 & 0 & 0 & 0 & 0 & 0 & 0 & 0 & 83 & 93 & 99 & 100 & 100 & 100 & 100 & 100 & 100 & 100 \\
	&& 5.3 & 1 & 23 & 7 & 1 & 0 & 0 & 0 & 0 & 1 & 0 & 0 & 77 & 93 & 99 & 100 & 100 & 100 & 100 & 99 & 100 & 100 \\
	&& 5.4 & 0 & 24 & 12 & 4 & 2 & 0 & 0 & 1 & 0 & 0 & 0 & 76 & 88 & 96 & 98 & 100 & 100 & 99 & 100 & 100 & 100 \\
	&& 5.5 & 1 & 30 & 12 & 4 & 2 & 3 & 1 & 0 & 1 & 4 & 2 & 70 & 88 & 96 & 98 & 97 & 99 & 100 & 99 & 96 & 98 \\
	\cline{3-24}
	&& 6.1 & 0 & 5 & 0 & 0 & 0 & 0 & 0 & 0 & 0 & 0 & 0 & 95 & 100 & 100 & 100 & 100 & 100 & 100 & 100 & 100 & 100 \\
	&& 6.2 & 31 & 3 & 6 & 3 & 11 & 14 & 33 & 42 & 42 & 42 & 33 & 93 & 91 & 83 & 67 & 48 & 18 & 3 & 0 & 0 & 0 \\
	\cline{3-24}
	&& 7.1 & 35 & 4 & 7 & 5 & 7 & 25 & 30 & 38 & 39 & 43 & 34 & 92 & 84 & 77 & 70 & 42 & 19 & 4 & 0 & 0 & 0 \\
	&& 7.2 & 6 & 2 & 2 & 0 & 2 & 6 & 7 & 18 & 8 & 10 & 12 & 91 & 87 & 70 & 66 & 41 & 14 & 5 & 0 & 0 & 0 \\
	&& 7.3 & 0 & 4 & 0 & 0 & 0 & 1 & 0 & 0 & 0 & 0 & 0 & 81 & 77 & 69 & 46 & 44 & 19 & 19 & 16 & 18 & 13 \\
	\cline{3-24}
	&& 8.1 & 0 & 2 & 0 & 0 & 0 & 0 & 0 & 0 & 0 & 0 & 0 & 98 & 100 & 100 & 100 & 100 & 100 & 100 & 100 & 100 & 100 \\
	&& 8.2 & 1 & 20 & 14 & 7 & 3 & 1 & 0 & 2 & 2 & 2 & 1 & 80 & 86 & 93 & 97 & 99 & 100 & 98 & 98 & 98 & 99 \\
	&& 8.3 & 0 & 14 & 0 & 0 & 0 & 0 & 0 & 0 & 0 & 0 & 0 & 86 & 99 & 100 & 100 & 100 & 100 & 100 & 100 & 100 & 100 \\
	&& 8.4 & 0 & 2 & 0 & 0 & 0 & 0 & 0 & 0 & 0 & 0 & 0 & 98 & 100 & 100 & 100 & 100 & 100 & 100 & 100 & 100 & 100 \\
	&& 8.5 & 0 & 0 & 0 & 0 & 0 & 0 & 0 & 0 & 0 & 0 & 0 & 100 & 100 & 100 & 100 & 100 & 100 & 100 & 100 & 100 & 100 \\
	\cline{3-24}
	&& 9.1 & 0 & 12 & 1 & 0 & 0 & 0 & 0 & 0 & 0 & 0 & 0 & 88 & 99 & 100 & 100 & 100 & 100 & 100 & 100 & 100 & 100 \\
	&& 9.2 & 0 & 16 & 2 & 0 & 0 & 0 & 0 & 0 & 0 & 0 & 0 & 84 & 98 & 100 & 100 & 100 & 100 & 100 & 100 & 100 & 100 \\
	&& 9.3 & 0 & 8 & 0 & 0 & 0 & 0 & 0 & 0 & 0 & 0 & 0 & 92 & 100 & 100 & 100 & 100 & 100 & 100 & 100 & 100 & 100 \\
	&& 9.4 & 0 & 6 & 0 & 0 & 0 & 0 & 0 & 0 & 0 & 0 & 0 & 94 & 100 & 100 & 100 & 100 & 100 & 100 & 100 & 100 & 100 \\
	&& 9.5 & 0 & 3 & 0 & 0 & 0 & 0 & 0 & 0 & 0 & 0 & 0 & 97 & 100 & 100 & 100 & 100 & 100 & 100 & 100 & 100 & 100 \\

\midrule

	\multirow{19}{*}{UN} & \multirow{8}{*}{PASS}& 18 & 99 & 100 & 100 & 100 & 100 & 100 & 100 & 100 & 100 & 100 & 100 & 0 & 0 & 0 & 0 & 0 & 0 & 0 & 0 & 0 & 0 \\
	&& 19 & 100 & 100 & 100 & 100 & 100 & 100 & 100 & 100 & 100 & 100 & 100 & 0 & 0 & 0 & 0 & 0 & 0 & 0 & 0 & 0 & 0 \\
	&& 20 & 100 & 100 & 100 & 100 & 100 & 100 & 100 & 100 & 100 & 100 & 100 & 0 & 0 & 0 & 0 & 0 & 0 & 0 & 0 & 0 & 0 \\
	&& 21 & 99 & 100 & 100 & 100 & 100 & 100 & 100 & 100 & 100 & 100 & 100 & 0 & 0 & 0 & 0 & 0 & 0 & 0 & 0 & 0 & 0 \\
	&& 22 & 99 & 100 & 100 & 100 & 100 & 100 & 100 & 100 & 100 & 100 & 100 & 0 & 0 & 0 & 0 & 0 & 0 & 0 & 0 & 0 & 0 \\
	\cline{3-24}
	&& 23.1 & 99 & 100 & 100 & 100 & 100 & 100 & 100 & 100 & 100 & 100 & 100 & 0 & 0 & 0 & 0 & 0 & 0 & 0 & 0 & 0 & 0 \\
	&& 23.2 & 100 & 100 & 100 & 100 & 100 & 100 & 100 & 100 & 100 & 100 & 100 & 0 & 0 & 0 & 0 & 0 & 0 & 0 & 0 & 0 & 0 \\
	&& 23.3 & 99 & 100 & 100 & 100 & 100 & 100 & 100 & 100 & 100 & 100 & 100 & 0 & 0 & 0 & 0 & 0 & 0 & 0 & 0 & 0 & 0 \\
	
	\cmidrule{2-24}
	
	& \multirow{10}{*}{FAIL} & 24 & 0 & 0 & 0 & 0 & 0 & 0 & 0 & 0 & 0 & 0 & 0 & 67 & 93 & 97 & 98 & 100 & 100 & 100 & 100 & 100 & 100 \\
	&& 25 & 0 & 0 & 0 & 0 & 0 & 0 & 0 & 0 & 0 & 0 & 0 & 69 & 87 & 96 & 99 & 100 & 100 & 100 & 100 & 100 & 100 \\
	&& 26 & 0 & 0 & 0 & 0 & 0 & 0 & 0 & 0 & 0 & 0 & 0 & 0 & 0 & 0 & 0 & 0 & 0 & 0 & 0 & 0 & 0 \\
	\cline{3-24}
	&& 27.1 & 0 & 0 & 0 & 0 & 0 & 0 & 0 & 0 & 0 & 0 & 0 & 87 & 97 & 100 & 100 & 100 & 100 & 100 & 100 & 100 & 100 \\
	&& 27.2 & 0 & 0 & 0 & 0 & 0 & 0 & 0 & 0 & 0 & 0 & 0 & 61 & 76 & 86 & 83 & 99 & 97 & 98 & 98 & 98 & 100 \\
	&& 27.3 & 67 & 22 & 11 & 4 & 2 & 1 & 0 & 0 & 0 & 0 & 0 & 60 & 84 & 96 & 96 & 99 & 100 & 100 & 100 & 100 & 100 \\
	&& 27.4 & 0 & 0 & 0 & 0 & 0 & 0 & 0 & 0 & 0 & 0 & 0 & 91 & 96 & 100 & 100 & 100 & 100 & 100 & 100 & 100 & 100 \\
	&& 27.5 & 0 & 0 & 0 & 0 & 0 & 0 & 0 & 0 & 0 & 0 & 0 & 86 & 100 & 100 & 100 & 100 & 100 & 100 & 100 & 100 & 100 \\
	\cline{3-24}
	&& 28.1 & 0 & 1 & 0 & 0 & 0 & 0 & 0 & 0 & 0 & 0 & 0 & 56 & 79 & 95 & 98 & 100 & 100 & 100 & 100 & 100 & 100 \\
	&& 28.2 & 0 & 0 & 0 & 0 & 0 & 0 & 0 & 0 & 0 & 0 & 0 & 67 & 87 & 93 & 96 & 100 & 100 & 100 & 100 & 100 & 100 \\

\bottomrule
\end{tabular}
}
\end{table*}

\begin{table*}
\centering
\caption{The testing results of benchmark programs with both original algorithms and optimized algorithms}
\label{tab:results}
\resizebox*{!}{\textheight-15mm}{
\begin{tabular}{c|c|c|c|c|cc|cc|c}
	\toprule
	&&&&& \multicolumn{2}{c|}{\textbf{Original Algorithm}} & \multicolumn{2}{c|}{\textbf{Optimized Algorithm}} & \\
	\textbf{Task} & \makecell{\textbf{Running} \\ \textbf{Parameters}} & \textbf{Expected} & \makecell{\textbf{Test} \\ \textbf{No.}} & \textbf{\#Qubits} & \textbf{\% of PASS} & \makecell{\textbf{Average} \\ \textbf{Run Time $T_0$}} & \textbf{\% of PASS} & \makecell{\textbf{Average} \\ \textbf{Run Time $T_{opt}$ }} & \textbf{$T_{opt} / T_0$} \\
	\midrule
	
	\multirow{25}{*}{EQ} & \multirow{25}{*}{ \makecell[c]{$k=4$ \\ $s=1545$ \\ $\epsilon=0.15$ \\ \\ $t = 20$ \\ for optimized \\ algorithm.} } & \multirow{4}{*}{PASS} & 1 & 2 & 100 & 3.0s & 100 & 42ms & $1.4\times 10^{-2}$ \\
	&&& 2 & 3 & 100 & 4.98s & 100 & 4.2s & 0.84 \\
	&&& 3 & 5 & 100 & 21.9s & 100 & 279ms & $1.3\times 10^{-2}$ \\
	&&& 4 & 5 & 100 & 20.5s & 100 & 283ms & $1.4\times 10^{-2}$ \\
	
	\cmidrule{3-10}
	
	&& \multirow{20}{*}{FAIL} & 5.1 & \multirow{5}{*}{2} & 2 & 1.02s & 0 & 15ms & $1.5\times 10^{-2}$ \\
	&&& 5.2 &  & 0 & 780ms & 0 & 9.7ms & $1.2\times 10^{-2}$ \\
	&&& 5.3 &  & 0 & 1.02s & 0 & 12ms & $1.2\times 10^{-2}$\\
	&&& 5.4 &  & 0 & 1.02s & 0 & 13ms & $1.3\times 10^{-2}$ \\
	&&& 5.5 &  & 0 & 1.2s & 3 & 11ms & $9.2\times 10^{-3}$ \\
	\cline{4-10}
	&&& 6.1 & \multirow{2}{*}{3} & 0 & 1.2s & 0 & 9.1ms & $7.6\times 10^{-3}$ \\
	&&& 6.2 &  & 31 & 3.36s & 40 & 3.24s & 0.96 \\
	\cline{4-10}
	&&& 7.1 & \multirow{3}{*}{3} & 45 & 3.72s & 28 & 2.76s & 0.74 \\
	&&& 7.2 &  & 14 & 2.76s & 6 & 1.86s & 0.67 \\
	&&& 7.3 &  & 0 & 1.56s & 0 & 1.32s & 0.85 \\
	\cline{4-10}
	&&& 8.1 & \multirow{5}{*}{5} & 0 & 5.34s & 0 & 28ms & $5.2\times 10^{-3}$ \\
	&&& 8.2 & & 1 & 8.22s & 1 & 70ms & $8.5\times 10^{-3}$ \\
	&&& 8.3 &  & 0 & 6.18s & 0 & 35ms & $5.7\times 10^{-3}$ \\
	&&& 8.4 &  & 0 & 5.7s & 0 & 33ms & $5.8\times 10^{-3}$ \\
	&&& 8.5 &  & 0 & 5.88s & 0 & 33ms & $5.6\times 10^{-3}$ \\
	\cline{4-10}
	&&& 9.1 & \multirow{5}{*}{5} & 0 & 4.92s & 0 & 49ms & $1.0\times 10^{-2}$ \\
	&&& 9.2 &  & 0 & 5.34s & 0 & 55ms & $1.0\times 10^{-2}$ \\
	&&& 9.3 &  & 0 & 5.16s & 0 & 52ms & $1.0\times 10^{-2}$ \\
	&&& 9.4 &  & 0 & 5.22s & 0 & 53ms & $1.0\times 10^{-2}$ \\
	&&& 9.5 &  & 0 & 5.46s & 0 & 21ms & $9.3\times 10^{-3}$ \\
	
	\midrule
	
	\multirow{21}{*}{ID} & \multirow{21}{*}{ $k=50$ } & \multirow{4}{*}{PASS} & 10 & 6 & 100 & 12ms & \multicolumn{3}{c}{ \multirow{21}{*}{N/A} } \\
	&&& 11 & 5 & 100 & 33ms & \multicolumn{3}{c}{ } \\
	&&& 12 & 5 & 100 & 48ms & \multicolumn{3}{c}{ } \\
	&&& 13 & 3 & 100 & 231ms & \multicolumn{3}{c}{ } \\
	
	\cmidrule{3-7}
	
	&& \multirow{17}{*}{FAIL} & 14.1 & \multirow{2}{*}{2} & 0 & 1.5ms & \multicolumn{3}{c}{ } \\
	&&& 14.2 &  & 0 & 2.5ms & \multicolumn{3}{c}{ } \\
	\cline{4-7}
	&&& 15.1 & \multirow{5}{*}{5} & 0 & 1.8ms & \multicolumn{3}{c}{ } \\
	&&& 15.2 &  & 0 & 1.8ms & \multicolumn{3}{c}{ } \\
	&&& 15.3 &  & 0 & 1.8ms & \multicolumn{3}{c}{ } \\
	&&& 15.4 &  & 0 & 1.8ms & \multicolumn{3}{c}{ } \\
	&&& 15.5 &  & 0 & 1.7ms & \multicolumn{3}{c}{ } \\
	\cline{4-7}
	&&& 16.1 & \multirow{5}{*}{5} & 0 & 2.2ms & \multicolumn{3}{c}{ } \\
	&&& 16.2 &  & 0 & 2.2ms & \multicolumn{3}{c}{ } \\
	&&& 16.3 &  & 0 & 2.2ms & \multicolumn{3}{c}{ } \\
	&&& 16.4 &  & 0 & 2.3ms & \multicolumn{3}{c}{ } \\
	&&& 16.5 &  & 0 & 2.2ms & \multicolumn{3}{c}{ } \\
	\cline{4-7}
	&&& 17.1 & \multirow{5}{*}{3} & 0 & 6.9ms & \multicolumn{3}{c}{ } \\
	&&& 17.2 &  & 0 & 8.2ms & \multicolumn{3}{c}{ } \\
	&&& 17.3 &  & 0 & 8.4ms & \multicolumn{3}{c}{ } \\
	&&& 17.4 &  & 0 & 7.8ms & \multicolumn{3}{c}{ } \\
	&&& 17.5 &  & 0 & 8.1ms & \multicolumn{3}{c}{ } \\
	
	\midrule
	
	\multirow{19}{*}{UN} & \multirow{19}{*}{ \makecell[c]{$k=4$ \\ $s=469$ \\ $\epsilon=0.15$ \\ \\ $t = 20$ \\ for optimized \\ algorithm.} } & \multirow{8}{*}{PASS} & 18 & 2 & 100 & 421ms & 100 & 434ms & 1.03 \\
	&&& 19 & 2 & 100 & 316ms & 98 & 329ms & 1.04 \\
	&&& 20 & 6 & 100 & 3.3s & 99 & 3.36s & 1.02 \\
	&&& 21 & 5 & 100 & 2.4s & 100 & 2.46s & 1.03 \\
	&&& 22 & 5 & 100 & 5.04s & 100 & 5.52s & 1.10 \\
	\cline{4-10}
	&&& 23.1 & \multirow{3}{*}{5} & 99 & 2.4s & 99 & 2.46s & 1.03 \\
	&&& 23.2 &  & 100 & 2.4s & 99 & 2.58s & 1.08 \\
	&&& 23.3 &  & 100 & 2.46s & 99 & 2.52s & 1.02 \\
	
	\cmidrule{3-10}
	
	&& \multirow{10}{*}{FAIL} & 24 & 3 & 0 & 139ms & 0 & 3.3ms & $2.4\times 10^{-2}$ \\
	&&& 25 & 3 & 0 & 131ms & 0 & 3.1ms & $2.4\times 10^{-2}$ \\
	&&& 26 & 6 & 0 & 1.08s & 0 & 1.14s & 1.06 \\
	\cline{4-10}
	&&& 27.1 & \multirow{5}{*}{5} & 0 & 583ms & 0 & 5.3ms & $9.1\times 10^{-3}$ \\
	&&& 27.2 &  & 0 & 600ms & 0 & 32ms & $4.8\times 10^{-2}$ \\
	&&& 27.3 &  & 74 & 2.6s & 0 & 7.6ms & $1.0\times 10^{-2}$ \\
	&&& 27.4 &  & 0 & 660ms & 0 & 5.3ms & $2.9\times 10^{-3}$ \\
	&&& 27.5 &  & 0 & 660ms & 0 & 7.2ms & $1.1\times 10^{-2}$ \\
	\cline{4-10}
	&&& 28.1 & \multirow{2}{*}{5} & 0 & 1.68s & 0 & 18ms & $1.1\times 10^{-2}$ \\
	&&& 28.2 &  & 0 & 1.68s & 0 & 18ms & $1.1\times 10^{-2}$ \\
	
	\bottomrule
\end{tabular}
}
\end{table*}

\subsubsection{RQ5: about the cases not performed so well}
\label{subsubsec:RQ5}

Table~\ref{tab:results} also presents the performance of error mutation programs. However, it is observed that some error mutation programs, such as No. 6.2, No. 7.1, and No. 27.3, demonstrate less effectiveness in terms of the number of PASS outcomes. It may be because they are close to the correct one. 
Here, we further research these cases and try to find some insights about bug patterns.

Table~\ref{tab:benchmark} and~\ref{tab:results} indicate that the troublesome cases predominantly arise from the benchmark programs \texttt{Cir2A}, \texttt{Cir2B}, and \texttt{QFT}. Notably, these programs also include instances where our algorithms perform well, such as in cases No. 6.1, No. 7.2, No. 7.3, No. 27.1, No. 27.2, No. 27.4, and No. 27.5. Therefore, contrasting the successful cases with the troublesome ones within the same benchmark programs could be insightful. Figure~\ref{fig:mutants} presents the Q\# implementation of \texttt{Cir2A}, \texttt{Cir2B}, and \texttt{QFT}, along with their mutant operators used in our evaluation.

Figure~\ref{fig:mutants}(a) illustrates the mutant operators for cases No. $6.1\sim 6.2$ and No. $7.1\sim 7.3$. This comparison reveals varying degrees of impact on equivalence by different mutant operators. For instance, altering an operation (as in No. 6.1, No. 7.2, No. 7.3) appears to have a more significant effect than swapping two operations (as in No. 6.2, No. 7.1). Additionally, it is noteworthy that the troublesome cases for EQ checking predominantly occur in non-unitary programs like \texttt{Cir2A} and \texttt{Cir2B}. This observation suggests that non-unitary programs may demand more precise EQ checking compared to unitary programs.

Figure~\ref{fig:mutants}(b) illustrates the mutant operators for cases No. $27.1\sim 27.5$. A key observation is that troublesome case 27.3 involves a mutation by adding a measurement at the end of the program, contrasting with the other cases where measurements are introduced at the beginning or within loops. This suggests that adding a measurement to a single qubit towards a program's end disrupts its unitarity slightly. Conversely, measurements placed in other positions, especially before controlled gates or within loops, appear to significantly amplify the measurement's effect, thereby greatly disrupting the program's overall unitarity.

In practice, we can select stricter parameters – specifically, choosing a smaller $\epsilon$ and a larger $k$, as depicted in Tables~\ref{tab:EQkepsilon} and~\ref{tab:UNkepsilon} – can enhance our ability to detect and address troublesome cases effectively. However, by Corollary ~\ref{cor:t}, this implies that a super-linearly longer running time is required. Therefore, balancing the ability to find bugs and the running time is also an aspect that should be considered. In most cases, choosing too precise parameters for a few troublesome error programs may be unnecessary.

\begin{figure*}
\centering
\subfigure[\texttt{Cir2A}, \texttt{Cir2B} and their mutant operators in Test No. 6 and 7.]{\includegraphics[scale=0.5]{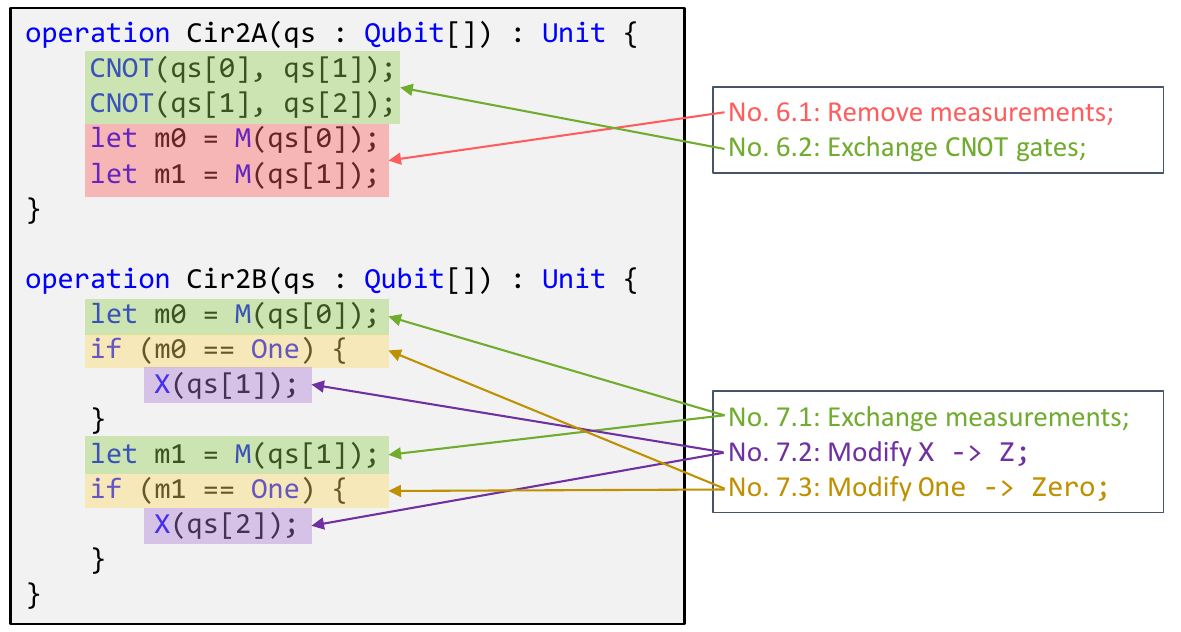}}
\\
\subfigure[\texttt{QFT} program and its mutant operators in Test No. 27.]{\includegraphics[scale=0.5]{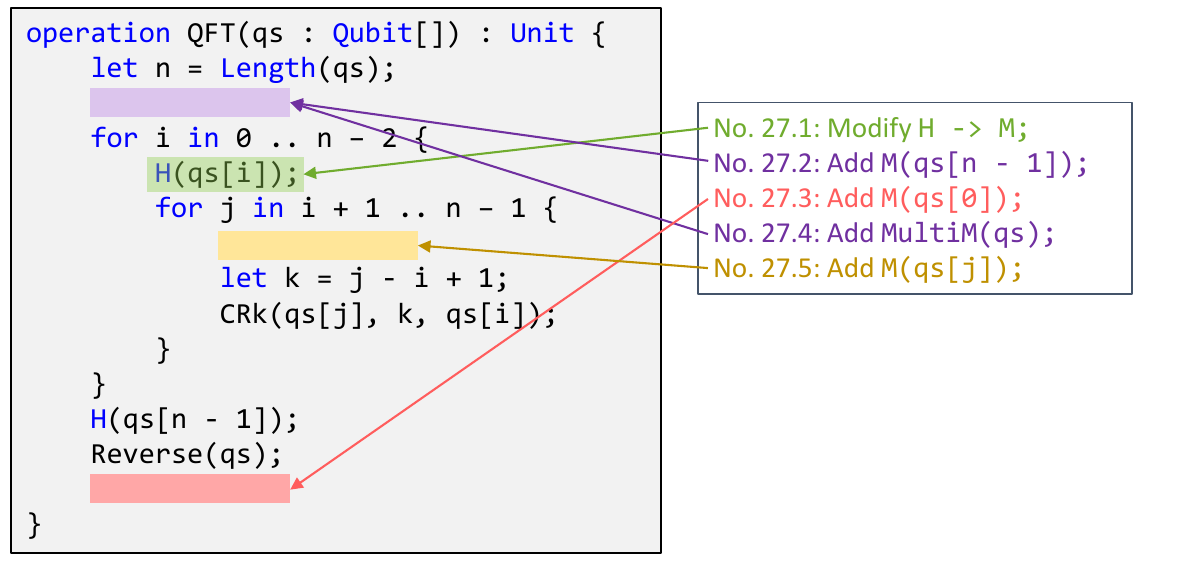}}
\caption{The Q\# implementation for programs \texttt{Cir2A}, \texttt{Cir2B}, and \texttt{QFT} and their mutant operators in the evaluation.}
\label{fig:mutants}
\end{figure*}

\section{Threats to Validity}
\label{sec:threats}

Our theoretical analysis and experimental evaluation have demonstrated the effectiveness of our methods. However, like other test methods, there are still some threats to the validity of our approach.

The first challenge arises from the limitation on the number of qubits. Our equivalence and unitarity checking methods rely on the Swap Test, which requires $2n+1$ qubits for a program with $n$ qubits. Consequently, our methods can only be applied to programs that utilize less than half of the available qubits. Furthermore, when conducting testing on a classical PC simulator, the running time for large-scale programs becomes impractical.

Although we evaluated our methods on a Q\# simulator, it is important to note that our algorithms may encounter difficulties when running on real quantum hardware. For instance, our methods rely on the availability of arbitrary gates, whereas some quantum hardware platforms impose restrictions on the types of quantum gates, such as only supporting the controlled gate (CNOT) on two adjacent qubits. Additionally, the presence of noise in the hardware can potentially diminish the effectiveness of our methods on real quantum hardware.

\section{Related Work}
\label{sec:relatedwork}

We provide an overview of the related work in the field of testing quantum programs. Quantum software testing is a nascent research area still in its early stages~\cite{miranskyy2019testing,miranskyy2021testing,garcia2021quantum,zhao2020quantum}. Various methods and techniques have been proposed for testing quantum programs from different perspectives, including quantum assertion~\cite{honarvar2020property,li2020projection}, search-based testing~\cite{wang2021generating,wang2022qusbt,wang2022mutation}, fuzz testing~\cite{wang2018quanfuzz}, combinatorial testing~\cite{wang2021application}, and metamorphic testing~\cite{abreu2022metamorphic}. These methods aim to adapt well-established classical software testing techniques to the domain of quantum computing. Moreover, Ali et al.~\cite{ali2021assessing,wang2021quito} defined input-output coverage criteria for quantum program testing and employed mutation analysis to evaluate the effectiveness of these criteria. However, their coverage criteria have limitations that restrict their applicability to testing complex, multi-subroutine quantum programs. Long and Zhao~\cite{long2022process} presented a framework for testing quantum programs with multiple subroutines. Furthermore, researchers have explored the application of mutation testing and analysis techniques to the field of quantum computing to support the testing of quantum programs~\cite{fortunato2022mutation-1,fortunato2022mutation,fortunato2022qmutpy,mendiluze2021muskit}. However, black-box testing for quantum programs has not been adequately addressed. In this paper, we propose novel checking algorithms for conducting equivalence, identity, and unitarity checking in black-box testing scenarios.

The field of equivalence checking for quantum circuits is currently active, with a number of tools dedicated to this task~\cite{viamontes2007checking,yamashita2010fast,burgholzer2020advanced,hong2021approximate,hong2022eqchk_dynamic,wang2022eqchk_seq}. Typically, this task assumes prior knowledge of the structures of two quantum circuits to verify their equivalence, known as white-box checking. Our research, however, pivots towards the equally important but less explored domain of black-box checking for equivalence.

To the best of our knowledge, our work represents the first attempt to adapt quantum information methods to support black-box testing of quantum programs. Various methods have been proposed to address problems related to quantum systems and processes, including quantum distance estimation~\cite{Flammia2011FewPauli, Cerezo2020VariationalQF}, quantum discrimination~\cite{Stephen2008Discrimination, zhang2006300discri_mixed, Zhang2007discri_puremix}, quantum tomography~\cite{tomography1989, Chuang1996statetomo}, and Swap Test~\cite{buhrman2001quantum, ekert2002direct}. By leveraging these methods, it is possible to conduct equivalence checking by estimating the distance between two quantum operations. Additionally, Kean et al.~\cite{kean2022unitarity} have conducted a thorough investigation into the unitarity estimation of quantum channels. However, it should be noted that these methods are primarily tailored for quantum information processing, specifically addressing aspects such as quantum noise. Their prerequisites and methodologies are not directly applicable to the black-box testing of quantum programs.

\section{Conclusion}
\label{sec:conclusion}

This paper presents novel methods for checking the equivalence, identity, and unitarity of quantum programs, which play a crucial role in facilitating black-box testing of these programs. Through a combination of theoretical analysis and experimental evaluations, we have demonstrated the effectiveness of our methods. The evaluation results clearly indicate that our approaches successfully enable equivalence, identity, and unitarity checking, thereby supporting the black-box testing of quantum programs.

Several areas merit further research. First, exploring additional quantum relations and devising new checking methods for them holds promise for enhancing the scope and applicability of our approach. Second, considering quantum programs that involve both classical and quantum inputs and outputs presents an intriguing research direction, necessitating the development of appropriate modeling techniques.

\section*{Acknowledgement}
\label{sec:acks}
The authors would like to thank Kean Chen for his valuable discussions regarding the writing of this paper. This work is supported in part by the National Natural Science Foundation of China under Grant 61832015 and JSPS KAKENHI Grant No. JP23H03372.

\bibliographystyle{acm}
\bibliography{ref}

\clearpage
\section*{Appendix}

\appendix
\section{Proof of Theorem~\ref{TheoremUcheck}}
\label{apdsec:ProofUcheck}

To prove Theorem~\ref{TheoremUcheck}, we need the following lemma first.

\begin{lemma}
Let $\mathcal{E}$ be a quantum operation on $d$-dimensional Hilbert space. Suppose it has the \textit{operator-sum} representation $\mathcal{E}(\rho)=\sum_{i=1}^{d^2}E_i \rho E_i^{\dagger}$. $\mathcal{E}$ represents a unitary transform $U$ if and only if $\forall E_i$, $E_i=k_i U$, where $k_i$ are complex numbers satisfying $\sum_{i=1}^{d^2}|k_i|^2=1$.
\end{lemma}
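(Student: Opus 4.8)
The plan is to handle the two implications separately; the reverse implication is a one-line computation, and the forward implication carries all the weight.

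For the ``if'' direction, I would simply substitute: assuming $E_i = k_i U$ with $\sum_i|k_i|^2=1$, one computes $\mathcal{E}(\rho)=\sum_i E_i\rho E_i^\dagger=\left(\sum_i|k_i|^2\right)U\rho U^\dagger=U\rho U^\dagger$, so $\mathcal{E}$ is exactly the unitary transform $U$; moreover $\sum_i E_i^\dagger E_i=\left(\sum_i|k_i|^2\right)U^\dagger U=I$ confirms it is trace-preserving.

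For the ``only if'' direction, suppose $\mathcal{E}(\rho)=U\rho U^\dagger$ for every density matrix $\rho$, hence (by linearity of both sides) for every matrix $\rho$. First I would reduce to the identity channel by setting $F_i=U^\dagger E_i$; then $\sum_i F_i\rho F_i^\dagger=U^\dagger\left(\sum_i E_i\rho E_i^\dagger\right)U=U^\dagger U\rho U^\dagger U=\rho$ for all $\rho$, so $\{F_i\}$ is an operator-sum representation of the identity operation. It therefore suffices to prove that each $F_i$ is a scalar multiple of the identity, since then $E_i=UF_i=k_iU$.

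The hard part is this last claim: if $\sum_i F_i\rho F_i^\dagger=\rho$ for all $\rho$, then $F_i=k_iI$. My approach is to evaluate the identity on an arbitrary pure state $\rho=\left|\psi\right>\left<\psi\right|$, obtaining $\sum_i\left(F_i\left|\psi\right>\right)\left(F_i\left|\psi\right>\right)^\dagger=\left|\psi\right>\left<\psi\right|$. The left-hand side is a sum of positive semidefinite operators, so the range of each summand lies inside the range of the right-hand side, namely $\mathrm{span}\{\left|\psi\right>\}$; hence $F_i\left|\psi\right>$ is a scalar multiple of $\left|\psi\right>$ for every $\left|\psi\right>$. An operator for which every vector is an eigenvector must be a scalar multiple of the identity — this follows by comparing the action of $F_i$ on $\left|\psi\right>$, $\left|\phi\right>$, and $\left|\psi\right>+\left|\phi\right>$ for two linearly independent vectors, which forces a common eigenvalue — so $F_i=k_iI$ and $E_i=k_iU$. (Alternatively, one could quote the unitary-freedom theorem for operator-sum representations: since $\{U,0,\dots,0\}$ and $\{E_i\}$ describe the same operation, $E_i=W_{i1}U$ for some unitary $W$, and $k_i=W_{i1}$.) Finally, the normalization $\sum_i|k_i|^2=1$ is forced because $\mathcal{E}(\rho)=\left(\sum_i|k_i|^2\right)U\rho U^\dagger$ must equal $U\rho U^\dagger$ for all $\rho$, completing the proof.
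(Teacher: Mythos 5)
Your proof is correct, but the forward direction follows a genuinely different route from the paper. The paper disposes of the whole lemma in a few lines by invoking the unitary-freedom theorem for operator-sum representations (Theorem 8.2 of Nielsen and Chuang): since $\{U,0,\dots,0\}$ is one Kraus representation of the map $\rho\mapsto U\rho U^\dagger$, any other representation $\{E_i\}$ must satisfy $E_i=w_{i1}U$ with $(w_{ij})$ unitary, and the column normalization $\sum_i|w_{i1}|^2=1$ gives the claim --- which is exactly the alternative you mention parenthetically. Your main argument instead proves the key fact from scratch: you conjugate by $U^\dagger$ to reduce to the identity channel, then show that any Kraus operator $F_i$ of the identity satisfies $F_i\left|\psi\right>\in\mathrm{span}\{\left|\psi\right>\}$ for every $\left|\psi\right>$ (because each positive-semidefinite summand of a rank-one projector has range inside that projector's range), and that an operator for which every vector is an eigenvector is a scalar multiple of $I$; the normalization $\sum_i|k_i|^2=1$ then follows by comparing $\mathcal{E}(\rho)$ with $U\rho U^\dagger$. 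All of these steps are sound (the range-inclusion argument for PSD summands and the common-eigenvalue argument via $\left|\psi\right>$, $\left|\phi\right>$, $\left|\psi\right>+\left|\phi\right>$ are standard and airtight), so what you gain is a self-contained, elementary proof that does not rely on the classification of Kraus representations, at the cost of a somewhat longer argument; the paper's citation-based proof is shorter but imports a nontrivial external theorem. Your explicit verification of the ``if'' direction is also slightly more careful than the paper's, which treats it as immediate from the same theorem.
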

\begin{proof}
It is obvious that $\{F_i\}$, where $F_1=U$, $F_k=0$ for $k>1$, is an operator-sum representation of $U$. According to Theorem 8.2 in~\cite{nielsen2002quantum}, other group of elements $\{E_i\}$ represents $U$ if and only if $E_i = \sum_j w_{ij}F_j = w_{i1}U$, where all $w_{ij}$ constitutes a unitary matrix and thus $\forall j, \sum_i |w_{ij}|^2=1$. Let $k_i = w_{i1}$ and then $E_i=k_i U$.
\end{proof}

Now we are able to prove Theorem~\ref{TheoremUcheck}.

\begin{proof}[for \textbf{Theorem~\ref{TheoremUcheck}}]\

"$\Rightarrow$." Note that if $\rho_1$, $\rho_2$ are both pure states, $tr(\rho_1 \rho_2)$ is the inner product of $\rho_1$ and $\rho_2$. $\left|m\right>$ and $\left|n\right>$ are two different basis vectors, so $\left<m|n\right>=0$ and $\left<+_{mn}|-_{mn}\right>=0$ for $m \neq n$. A unitary transform keeps purity and inner products.

"$\Leftarrow$." Suppose $\mathcal{E}$ has the operator-sum representation $\mathcal{E}(\rho)=\sum_{i=1}E_i \rho E_i^{\dagger}$. According to condition (1),

\begin{align}
&tr\left[ \mathcal{E}(\left|m\right>\left<m\right|) \mathcal{E}(\left|n\right>\left<n\right|) \right] \notag\\
&= tr\left[\sum_{i,j} E_i\left|m\right>\left<m\right|E_i^{\dagger} E_j\left|n\right>\left<n\right|E_j^{\dagger} \right] \notag\\
&= \sum_{i,j} tr\left[E_i\left|m\right>\left<m\right|E_i^{\dagger} E_j\left|n\right>\left<n\right|E_j^{\dagger} \right] \notag\\
&= \sum_{i,j}\left|\left<m\right|E_i^{\dagger}E_j\left|n\right>\right|^2 = 0
\end{align}

Let $A_{ij} = E_i^{\dagger}E_j$. So $\forall i, j$ and $m \neq n$, $\left<m\right|A_{ij}\left|n\right> = 0$.  It means each $A_{ij}$ is diagonal on the basis $\{\left|1\right>\dots\left|d\right>\}$. According to condition (2), similarly, 

\begin{align}
&tr\left[ \mathcal{E}(\left|+_{mn}\right>\left<+_{mn}\right|) \mathcal{E}(\left|-_{mn}\right>\left<-_{mn}\right|) \right] \notag\\
&= \sum_{i,j}\left|\left<+_{mn}\right|A_{ij}\left|-_{mn}\right>\right|^2 = 0
\end{align}

\noindent It means $\forall i, j$, 

\begin{align}
\left<+_{mn}\right|A_{ij}\left|-_{mn}\right> = \frac{1}{2}(\left<m\right|+\left<n\right|)A_{ij}(\left|m\right>-\left|n\right>) = 0
\end{align}

\noindent so,

\begin{align}
&\frac{1}{2}(\left<m\right|A_{ij}\left|m\right> - \left<m\right|A_{ij}\left|n\right> + \left<n\right|A_{ij}\left|m\right> - \left<n\right|A_{ij}\left|n\right>) \notag\\
&= \frac{1}{2}(\left<m\right|A_{ij}\left|m\right> - \left<n\right|A_{ij}\left|n\right>) = 0
\end{align}

\noindent and thus for $m \neq n$, 

\begin{align}
\label{equ:mAmnAn}
\left<m\right|A_{ij}\left|m\right> = \left<n\right|A_{ij}\left|n\right>
\end{align}

Condition (2) selected $d-1$ combinations of $(m,n), m \neq n$ which form the edges of a connected graph with vertices $1,\dots,d$. Then from (\ref{equ:mAmnAn}), we can deduce that

\begin{align}
\left<1\right|A_{ij}\left|1\right> = \left<2\right|A_{ij}\left|2\right> = \dots = \left<d\right|A_{ij}\left|d\right>
\end{align}

\noindent It means the diagonal elements of $A_{ij}$ are equal, i.e., $\forall i,j, A_{ij} = E_i^\dagger E_j = k_{ij}I$, where $I$ is identity matrix.

For the case of $i=j$, $E_i^\dagger E_i=k_{ii}I$. Note that $(E_i^\dagger E_i)^\dagger = E_i^\dagger E_i$, which means $k_{ii}$ is a non-negative real number. Let $W_i=\frac{1}{\sqrt{k_{ii}}}E_i$, then $W_i^\dagger W_i=I$. So $W_i$ is a unitary matrix and $E_i = \sqrt{k_{ii}}W_i$. From $\sum_i E_i^{\dagger}E_i = I$ we can deduce that $\sum_i k_{ii} = 1$.

For the case of $i \neq j$, $E_i^\dagger E_j = \sqrt{k_{ii}k_{jj}}W_i^\dagger W_j=k_{ij}I$. Let $t_{ij} = \frac{k_{ij}}{\sqrt{k_{ii}k_{jj}}}$, then $W_i^\dagger W_j = t_{ij}I$ and thus $W_j=t_{ij}W_i$. Note that $W_i^\dagger W_j$ is a unitary matrix, so $|t_{ij}|=1$. Let $W=W_1, t_1 = 1$, and for $j \neq 1, t_j=t_{1j}$, then $\forall j, W_j = t_j W$ and $E_j = \sqrt{k_{jj}}t_j W$. In addition, $\sum_{j}\left|\sqrt{k_{jj}}t_j\right|^2 = \sum_j k_{jj} = 1$. According to Lemma 1, $\mathcal{E}$ is a unitary transform.
\end{proof}

\section{Proof of Proposition~\ref{prop:SelectT}}
\label{apdsec:ProofT}

\subsection{$s$ of Equivalence Checking}

Each round of the Swap Test can be seen as a Bernoulli experiment. Algorithm~\ref{Alg:EqCheck} execute three groups of Swap Test with the same number of rounds $s$. Denote the result of $i$-th round for Swap Test in lines 3, 4, and 5 as $x_i$, $y_i$, and $z_i$ respectively. Note that $x_i, y_i, z_i \in \{0,1\}$ and $s_1 = \sum_{i=1}^s x_i$, $s_2 = \sum_{i=1}^s y_i$, $s_{12} = \sum_{i=1}^s z_i$. Construct a variable $w_i$:

\begin{equation}
w_i = 2z_i - x_i - y_i
\end{equation}

\noindent Obviously $w_i \in [-2,2]$, and the mean value

\begin{align}
\bar{w} = \frac{1}{s}\sum_{i=1}^s w_i &= \frac{1}{s}\left( 2\sum_{i=1}^s z_i - \sum_{i=1}^s x_i - \sum_{i=1}^s y_i \right) \notag\\
&= \frac{2s_{12}-s_1-s_2}{s}
\end{align}

In Algorithm~\ref{Alg:EqCheck}, we estimate variable $r = \frac{2s_{12}-s_1-s_2}{s} = \bar{w}$. If two target programs are equivalent, whatever the input is, the distribution of every $x_i$, $y_i$, and $z_i$ are the same, and the expectation of $\bar{w}$:

\begin{equation}
E(\bar{w}) = \frac{1}{s}\left( 2\sum_{i=1}^s E(z_i) - \sum_{i=1}^s E(x_i) - \sum_{i=1}^s E(y_i) \right) = 0
\end{equation}

According to the \textit{Hoeffding's Inequality}~\cite{hoeffding1963},

\begin{equation}
Pr(\left|\bar{w} - E(\bar{w})\right| \geq \epsilon) \leq 2e^{\frac{-2\epsilon^2 s^2}{s \cdot (2-(-2))^2}}
\end{equation}

\noindent we obtain

\begin{equation}
Pr(|r|\geq\epsilon) \leq 2e^{-\frac{s\epsilon^2}{8}}
\end{equation}

\noindent $Pr(|r|\geq\epsilon)$ is the probability of failing in one test point. Note if two target programs are equivalent, it is the same for all test points. Let $P_2$ be the probability of passing in one test point, then

\begin{equation}
P_2 > 1 - 2e^{-\frac{s\epsilon^2}{8}}
\end{equation}

Running testing on $k$ test points, because if one test point fails, the testing result is "FAIL," the overall probability of failing is $1-P_2^k$, and it is also the probability of type II error. To control it not to exceed $\alpha_2$:

\begin{equation}
1-P_2^k \leq \alpha_2
\end{equation}

\noindent Then,

\begin{equation}
k \leq \frac{\ln(1-\alpha_2)}{\ln P_2} \leq \frac{\ln(1-\alpha_2)}{\ln(1-2e^{-\frac{s\epsilon^2}{8}})}
\end{equation}

\noindent Figure out $s$:

\begin{equation}
s \geq \frac{8}{\epsilon^2}\ln \frac{2}{1-(1-\alpha_2)^{\frac{1}{k}}}
\end{equation}

\subsection{$s$ of Unitarity Checking}

In Algorithm~\ref{Alg:UnCheck}, we estimate variable $r = 1-\frac{2s_1}{s}$. If the target program $\mathcal{P}$ is unitary, for any input $(\rho_1,\rho_2)$ selected by the algorithm, $tr(\mathcal{P}(\rho_1),\mathcal{P}(\rho_2)) = 0$. So $E(r) = 0$ and $E(s_1) = \frac{s}{2}$. According to the \textit{Chernoff's Bound}~\cite{chernoff1952}, 


\begin{equation}
Pr(|s_1 - E(s_1)| \geq \epsilon E(s_1)) \leq 2^{-\frac{s\epsilon^2}{2}}
\end{equation}

\noindent That is

\begin{align}
Pr\left(|s_1 - \frac{s}{2}| \geq \frac{s\epsilon}{2}\right) \leq 2^{-\frac{s\epsilon^2}{2}} \notag\\
Pr(|r| \geq \epsilon) \leq 2^{-\frac{s\epsilon^2}{2}}
\end{align}

\noindent $Pr(|r|\geq\epsilon)$ is the probability of failing in one test point. Note if the target program is unitary, $Pr(|r|\geq\epsilon)$ is the same for all test points. Let $P_2$ be the probability of passing in one test point, then

\begin{equation}
P_2 > 1 - 2^{-\frac{s\epsilon^2}{2}}
\end{equation}

\noindent Similarly, limit the probability of failing:

\begin{equation}
1-P_2^k \leq \alpha_2
\end{equation}

\noindent Then,

\begin{equation}
k \leq \frac{\ln(1-\alpha_2)}{\ln P_2} \leq \frac{\ln(1-\alpha_2)}{\ln (1 - 2^{-\frac{s\epsilon^2}{2}})}
\end{equation}

\noindent Figure out $s$:

\begin{equation}
s \geq \frac{2}{\epsilon^2}\log_2 \frac{1}{1-(1-\alpha_2)^{\frac{1}{k}}}
\end{equation}

\subsection{Proof of Formula (\ref{equ:Tbound})}

\textit{The left less-than sign.} Note that $\forall a \in (0,1)$ and $x \in \mathbb{R}$,

\begin{equation}
a^x \geq x\ln a + 1
\end{equation}

\noindent It deduces that

\begin{equation}
\frac{1}{1-a^x} \geq \frac{1}{-x\ln a}
\end{equation}

\noindent Substitute $x$ with $1/k$ and $a$ with $1-\alpha_2$,

\begin{equation}
\frac{1}{1-(1-\alpha_2)^\frac{1}{k}} \geq \frac{k}{-\ln(1-\alpha_2)}
\end{equation}

\textit{The right less-than sign.} $k\geq 1$, so $1/k \in (0,1]$. Note that if $a\in(0,1)$ and $x\in[0,1]$,

\begin{equation}
a^x \leq (a-1)x+1
\end{equation}

\noindent It deduces that

\begin{equation}
\frac{1}{1-a^x} \leq \frac{1}{(1-a)x}
\end{equation}

\noindent Substitute $x$ with $1/k$ and $a$ with $1-\alpha_2$,

\begin{equation}
\frac{1}{1-(1-\alpha_2)^\frac{1}{k}} \leq \frac{k}{\alpha_2}
\end{equation}

\end{document}